\theoremstyle{plain}
\newtheorem{theorem}{Theorem}
\newtheorem{lemma}[theorem]{Lemma}
\newtheorem{prop}[theorem]{Proposition}
\newtheorem{remark}[theorem]{Remark}
\theoremstyle{definition}
\newtheorem{definition}{Definition}
\newcommand{\myket}[1]{|#1\rangle}
\newcommand{\mybra}[1]{\langle #1 |}
\newcommand{\myketbra}[2]{|#1\rangle\langle #2 |}
\newcommand{\ket}[1]{\ensuremath{\left|#1\right>}}
\newcommand{\bra}[1]{\ensuremath{\left<#1\right|}}
\newcommand{\ketbra}[2]{| #1 \rangle\langle #2 |}
\newcommand{\be}{\begin{equation}}
\newcommand{\ee}{\end{equation}}
\newcommand{\bc}{\begin{center}}
\newcommand{\ec}{\end{center}}
\newcommand{\bea}{\begin{eqnarray}}
\newcommand{\eea}{\end{eqnarray}}
\newcommand{\tr}{\text{Tr}}
\renewcommand{\sp}{\text{sp}}
\newcommand{\W}{\mathcal{W}}
\def\opone{\leavevmode\hbox{\small1\kern-3.8pt\normalsize1}}
\newcommand{\sys}[1]{{#1}}
\newcommand{\setft}[1]{\mathrm{#1}}
\newcommand{\lin}[1]{\setft{L}\left(#1\right)}
\newcommand{\pos}[1]{\setft{Pos}\left(#1\right)}
\def\I{\mathbb{I}}
\def\({\left(}
\def\){\right)}
\def\X{\mathcal{X}}
\def\Y{\mathcal{Y}}
\def\Z{\mathcal{Z}}
\def\W{\mathcal{W}}
\begin{document}

\bc {\LARGE \bf Unconditionally-secure and reusable public-key
authentication
}

%
%
%
%

\medskip

\renewcommand{\thefootnote}{\fnsymbol{footnote}}
{\footnotesize L. M. Ioannou$^{1,2}$\footnote{lmi@iqc.ca} and M.
Mosca$^{1,2,3}$\footnote{mmosca@iqc.ca} \\ {\it
$^1$Institute for Quantum Computing, University of Waterloo,\\
200 University Avenue, Waterloo, Ontario, N2L 3G1, Canada \\
$^2$Department of Combinatorics and Optimization, University of Waterloo,\\
200 University Avenue, Waterloo, Ontario, N2L 3G1, Canada\\
$^3$Perimeter Institute for Theoretical Physics\\31 Caroline Street
North, Waterloo, Ontario, N2L 2Y5, Canada} }

\ec

\begin{quote}{\small
We present a quantum-public-key identification protocol and show
that it is secure against a computationally-unbounded adversary.
This demonstrates for the first time that unconditionally-secure and
reusable public-key authentication is possible in principle with
(pure-state) public keys.}
\end{quote}

\noindent \textbf{Keywords:} quantum-public-key cryptography,
information-theoretic security, authentication, identification

\section{Introduction}

Public-key cryptography has proved to be an indispensable tool in
the modern information security infrastructure.  Most notably,
digital signature schemes form the backbone of Internet commerce,
allowing trust to be propagated across the network in an efficient
fashion.  In turn, public-key encryption allows the private
communication of messages (or, more usually, the establishment of
symmetric secret keys) among users who are authenticated via digital
signatures.  The security of these classical public-key
cryptosystems relies on assumptions on the difficulty of certain
mathematical problems \cite{MvOV96}.  Gottesman and Chuang
\cite{qphGC01} initiated the study of quantum-public-key
cryptography, where the public keys are quantum systems, with the
goal of obtaining the functionality and efficiency of public-key
cryptosystems but with information-theoretic security.  They
presented a secure one-time digital signature scheme for signing
classical messages, based on Lamport's classical scheme
\cite{Lam79}.

In a public-key framework, Alice chooses a random private key,
creates copies of the corresponding public key via some
publicly-known algorithm, and distributes the copies in an
authenticated fashion to all potential ``Bobs''.  In principle, this
asymmetric setup allows, e.g., any Bob to send encrypted messages to
Alice or to verify any signature for a message that Alice digitally
signed.  By eliminating the need for each Alice-Bob pair to
establish a secret key (in large networks where there may be many
``Alices'' and ``Bobs''), the framework vastly simplifies key
distribution, which is often the most costly part of any
cryptosystem, compared to a framework that uses only symmetric keys.

Some remarks about the quantum-public-key framework are in order.
First, we address the issue of \emph{purity} of the quantum public
keys. In principle, the quantum public key can be either in a pure
or mixed state from Alice's point of view (a mixed state is a fixed
probabilistic distribution of pure states). Gottesman and Chuang
\cite{qphGC01} assumed pure-state public keys. For digital signature
schemes, this purity is crucial; for, otherwise, Alice could cheat
by sending different public keys to different ``Bobs''. Purity
prevents Alice's cheating in this case because different ``Bobs''
can compare their copies of the public key via a ``distributed
\textsc{swap}-test'' \cite{qphGC01} to check they are the same (with
high probability), much like can be done in the case of classical
public keys.  But any scheme can benefit from an equality test,
since an adversary who tries to substitute bad keys for legitimate
ones could thus be caught.  There is no known equality test
guaranteed to recognize when two mixed states are equal.  Thus,
having mixed-state public keys seems to be at odds with what it
means to be ``public'', i.e., publicly verifiable.\footnote{Other
authors have defined the framework to include mixed public keys, and
Ref. \cite{KKNY05} proposes an encryption scheme with mixed public
keys that is reusable and unconditionally secure
\cite{HKK08}.}~~Even though the scheme we present in this paper does
not make explicit use of the ``distributed \textsc{swap}-test''
(because we assume the public keys have been securely distributed),
it can do so in principle.  We view this as analogous to how modern
public-key protocols do not specify use of an equality test among
unsure ``Bobs'', but how such a test is supported by the framework
to help thwart attempts to distribute fake keys.

Second, we address the issue of \emph{usability} of
quantum-public-key systems. The states of two quantum public keys
corresponding to two different private keys always have overlap less
than $(1-\delta)$, for some positive and publicly known $\delta$.
Thus, a striking aspect of the quantum-public-key framework is that
the number of copies of the public key in circulation must be
limited (if we want information-theoretic security). If this were
not the case, then an adversary could collect an arbitrarily large
number of copies, measure them all, and determine the private key.
 By adjusting protocol parameters, this limit on the number of copies
of the quantum public key can be increased in order to accommodate
more users (or uses; see next paragraph for a discussion on
``reusability''). Thus, in practice, there is no restriction on the
usability of a quantum-public-key system as long as an accurate
estimate can be made of the maximum number of users/uses.

Presumably, adjusting the protocol parameters (as discussed above)
in order to increase the maximum number of copies of the quantum
public key in circulation would result in a less efficient protocol
instance, and this is one kind of tradeoff between efficiency and
usability in the quantum-public-key setting. Another kind concerns
\emph{reusability}. The abovementioned digital signature scheme is
``one-time'' because only one message may be signed under a
particular key-value (even though many different users can verify
that one signature). If a second message needs to be signed, the
signer must choose a new private key and then distribute
corresponding new public keys. One open problem is thus whether
there exist reusable digital signature schemes, where either the
same copy of the public key can be used to verify many different
message-signature pairs securely, or where just the same key-values
can be used to verify many different message-signature pairs
securely (but a fresh copy of the public key is needed for each
verification). The latter notion of ``reusability'' is what we adopt
here.

In this paper, we consider an identification scheme, which, like a
digital signature scheme, is a type of authentication scheme.
Authentication schemes seek to ensure the \emph{integrity} of
information, rather than its privacy.  While digital signature
schemes ensure the integrity of origin of messages, identification
schemes ensure the integrity of origin of communication \emph{in
real time} \cite{MvOV96}.  Identification protocols are said to
ensure ``aliveness''---that the entity proving its identity is
active at the time the protocol is executed; we describe them in
more detail in the next section.

We prove that an identification scheme based on the one in Ref.
\cite{qphIM09}
 is secure against a computationally-unbounded adversary (only
restricted by finite cheating strategies), demonstrating for the
first time that unconditionally-secure and reusable public-key
authentication is possible in principle. We regard our result more
as a proof of concept than a (potentially) practical scheme.
 Still,
we are confident that an extension of the techniques used here may
lead to more efficient protocols.

We now proceed with a description of the protocol (Section
\ref{idscheme}) and the security proof (Section \ref{sec_security}).

\section{Identification Protocol}\label{idscheme}

In the following, Alice and Bob are always assumed to be honest
players and Eve is always assumed to be the adversary.  Suppose
Alice generates a private key and authentically distributes copies
of the corresponding public key to any potential users of the
scheme, including Bob.

Here is a description (adapted from Section 4.7.5.1 in Ref.
\cite{Gol01}) of how a secure public-key identification scheme
works.  When Alice wants to identify herself to Bob (i.e. prove that
it is she with whom he is communicating), she invokes the
identification protocol by first telling Bob that she is Alice, so
that Bob knows he should use the public key corresponding to Alice.
The ensuing protocol has the property that the \emph{prover} Alice
can convince the \emph{verifier} Bob (except, possibly, with
negligible probability) that she is indeed Alice, but an adversary
Eve cannot fool Bob (except with negligible probability) into
thinking that she is Alice, even after having listened in on the
protocol between Alice and Bob or having participated as a (devious)
verifier in the protocol with Alice several times. Public-key
identification schemes are used in smart-card systems (e.g., inside
an automated teller machine (ATM) for access to a bank account, or
beside a doorway for access to a building); the smart card
``proves'' its identity to the card reader.\footnote{Note that it is
not a user's personal identification number (PIN) that functions as
the prover's private key; the PIN only serves to authenticate the
user to the smart card (not the smart card to the card reader).}

 Note that no
identification protocol is secure against an attack where Eve
concurrently acts as a verifier with Alice and as a prover with Bob
(but note also that, in such a case, the ``aliveness'' property is
still guaranteed). Note also that, by our definition of
``reusable,'' an identification scheme is considered reusable if
Alice can prove her identity many times using the same key-values
but the verifier needs a fresh copy of the public key for each
instance of the protocol.

Note also that public-key identification can be trivially achieved
via a digital signature scheme (Alice signs a random message
presented by Bob), but we do not know of an unconditionally-secure
and reusable digital signature scheme.\footnote{Pseudo-signature
schemes, such as the one in Ref. \cite{CR90}, are
information-theoretically secure but assume broadcast channels.}
Similarly, public-key identification can be achieved with a
public-key encryption scheme (Bob sends an encrypted random
challenge to Alice, who returns it decrypted), but we do not know of
an unconditionally-secure and reusable public-key encryption scheme
(that uses pure-state public keys; though, see Ref. \cite{Got05} for
a promising candidate).

\subsection{Protocol specification}\label{sec_ProtSpec}

The identification protocol takes the form of a typical
``challenge-response'' interactive proof system, consisting of a
kernel (or subprotocol) that is repeated several times in order to
amplify the security, i.e., reduce the probability that Eve can
break the protocol. The following protocol is a simplification of
the original protocol from Ref. \cite{qphIM09} (but our security
proof applies to both protocols, with only minor adjustments).  We
assume all quantum channels are perfect.

\subsubsection*{Parameters}

\begin{itemize}
\item The \emph{security} parameter $s \in \mathbf{Z}^{+}$

\subitem $\diamond$ equals the number of kernel iterations.

\subitem $\diamond$ The probability that Eve can break the protocol
 is exponentially small in $s$.

\item The \emph{reusability} parameter $r \in \mathbf{Z}^{+}$

\subitem $\diamond$ equals the maximum number of copies of the
quantum public key in circulation and

\subitem $\diamond$ equals the maximum number of times the protocol
may be executed by Alice, before she needs to pick a new private
key.
\end{itemize}

\subsubsection*{Keys}

\begin{itemize}
\item The \emph{private key} is
\begin{eqnarray}\label{private_key}
(x_1,x_2,\ldots,x_s),
\end{eqnarray}

where Alice chooses each $x_j$, $j=1,2,\ldots,s$, independently and
uniformly randomly from $\{1,2,\ldots,2r+1\}$.

\subitem $\diamond$ The value $x_j$ is used only in the $j$th
kernel-iteration.

\item One copy of the \emph{public key} is an $s$-partite system in the state
\begin{eqnarray}\label{public_key}
\sys{\otimes_{j=1}^s \myket{\psi_{x_j}}},
\end{eqnarray}
where (omitting normalization factors)
\begin{eqnarray}\label{public_key}
\myket{\psi_{x_j}} := \ket{0} + e^{2\pi i x_j /(2r+1)}\ket{1}.
\end{eqnarray}

\subitem $\diamond$ Alice authentically distributes (e.g. via
trusted courier) at most $r$ copies of the public key.

\subitem $\diamond$ The $j$th subsystem of the public key (which
 is in the state $\sys{\myket{\psi_{x_j}}}$) is only used in the $j$th
 kernel-iteration.

\end{itemize}

\subsubsection*{Actions}

\begin{itemize}
\item The \emph{kernel} $\mathcal{K}(x)$ of the protocol is the following three steps,
where we use the shorthand
\begin{eqnarray}
\phi_x := 2 \pi x / (r+1),
\end{eqnarray}
and where we have dropped the subscript ``$j$'' from ``$x_j$'':

\subitem (1) Bob secretly chooses a uniformly random bit $b$ and
transforms the state of his authentic copy of $\myket{\psi_x}$ into
$\ket{0} + (-1)^b e^{i\phi_x}\ket{1}$.  Bob sends this qubit to
Alice.

\subitem (2) Alice performs the phase shift $\ket{1}\mapsto
e^{-i\phi_x}\ket{1}$ on the received qubit and then measures the
qubit in the basis $\{\ket{0} \pm \ket{1}\}$ (in order to determine
Bob's secret $b$ above). If Alice gets the outcome corresponding to
``+'', she sends 0 to Bob; otherwise, Alice sends 1.

\subitem (3) Bob receives Alice's bit as $b'$ and tests whether $b'$
equals $b$.

\item  When Alice wants to identify herself to Bob, they take the
following actions:

\subitem (\emph{i}) Alice checks that she has not yet engaged in the
protocol $r$ times before with the current value of the private key;
if she has, she aborts (and refreshes the private and public keys).

\subitem (\emph{ii}) Alice sends Bob her purported identity
(``Alice''), so that Bob may retrieve the public keys corresponding
to Alice.

\subitem (\emph{iii}) The kernel $\mathcal{K}(x)$ is repeated $s$
times, for $x=x_1, x_2,\ldots, x_s$.  Bob ``accepts'' if he found
that $b'$ equaled $b$ in all the kernel iterations; otherwise, Bob
``rejects''.

\end{itemize}

\subsection{Completeness of the protocol}

It is clear that the protocol is correct for honest players: Bob
always ``accepts'' when Alice is the prover.  In the Appendix
(``Section \ref{sec_security}''), we prove that the protocol is also
secure against any adversary (only restricted by finite cheating
strategies): given $r$ and $\epsilon
>0$, there exists a value of $s = s(r,\epsilon)$ such that Bob
``accepts'' with probability at most $\epsilon$ when Eve is the
prover.

\section{Security}\label{sec_security}

Let us clearly define what Eve is allowed to do in our attack model.
Eve can
\begin{itemize}
\item passively monitor Alice's and Bob's interactions (which means
that Eve can read the classical bits sent by Alice, and read the bit
that indicates whether Bob ``accepts'' or ``rejects''), and

\item participate as the verifier in one or more complete instances of the protocol, and

\item participate as the prover, impersonating Alice, in one or more complete instances of the protocol.
\end{itemize}
Eve is assumed not to be able to actively interfere with Alice's and
Bob's communications during the protocol, as this would allow Eve to
concurrently act as verifier with Alice and as prover with Bob (thus
trivially breaking any such scheme\footnote{For password-based
identification in a symmetric-key model, as in Ref. \cite{DFSS07},
where both Alice and Bob know something that Eve does not (i.e. the
password), one can define a nontrivial ``man-in-the-middle'' attack,
where Eve's goal is to learn the password in order to impersonate
Alice in a later instance of the protocol. However, in public-key
identification, Eve's goal of learning the private key may, without
loss of generality, be accomplished by participating as a dishonest
verifier and by obtaining copies of the public key, since Bob does
not perform any action that Eve cannot perform herself given a copy
of the public key.}).

Evidently, Eve's passive monitoring only gives her independent and
random bits (and the bit corresponding to ``accept''), thus giving
her no useful information (in that she may as well generate random
bits herself). So, we can ignore the effects of her passive
monitoring.

With regard to Eve acting as verifier, we will give Eve potentially
more power by assuming that Alice, instead of performing both the
phase shift and the measurement in Step 2 of the kernel
$\mathcal{K}(x)$, only performs the phase shift (Eve could perform
Alice's measurement herself, if she desired). Furthermore, we will
assume that the phase shift Alice performs is
\begin{eqnarray}
u_{\phi_x} = \begin{bmatrix} 1 & 0 \\ 0 & e^{i\phi_x}\end{bmatrix}.
\end{eqnarray}
Even though Alice actually performs the inverse phase shift $u_{-
\phi_x}$, note that the two phase shifts are equivalent in the sense
that $Z u_{\phi_x} Z$ equals $u_{-\phi_x}$ up to global phase, where
\begin{eqnarray}
Z = \begin{bmatrix} 1 & 0 \\ 0 & -1\end{bmatrix}.
\end{eqnarray}
Thus the protocol is unchanged had we assumed that Alice, instead of
performing $u_{-\phi_x}$ in Step 2 of the kernel $\mathcal{K}(x)$,
performs $Z u_{\phi_x} Z$.  Since Eve can perform $Z$ gates on her
qubit immediately before and after she gives it to Alice, our
assumption indeed gives Eve at least as much power to cheat.  Thus,
Eve can effectively extract up to $r$ black boxes for $u_{\phi_x}$
from Alice (recall Alice only participates in the protocol $r$ times
before refreshing her keys).

We will also give Eve potentially more power by giving her a black
box for $u_{\phi_x}$ in place of every copy of $\ket{\psi_{x}}$ that
she obtained legitimately. For each $x \in \{x_1,x_2\ldots,x_s\}$,
let $t$ be the total number of black boxes for $u_{\phi_x}$ that Eve
has in her possession; that is, for simplicity, and without loss of
generality, we assume she has the same number of black boxes
$u_{\phi_x}$ for each value of $x$. Note that $t \leq (2r-1)$, since
we always assume that at least one copy of the public key is left
for Bob, so that Eve can carry out the protocol with him.

Therefore, to prove security in our setting, it suffices to consider
attacks where Eve first uses her $st$ black boxes to create a
reference system in some $(\phi_{x_1}, \phi_{x_2}, \ldots,
\phi_{x_s})$-dependent state, denoted $\ket{\Psi_R(\phi_{x_1},
\phi_{x_2}, \ldots, \phi_{x_s})}$, and then she uses this system
while she participates as a prover, impersonating Alice, in one or
many instances of the protocol in order to try to cause Bob to
``accept''.  We use the following definition of ``security'':

\begin{definition}[Security]\label{def_Security}
An identification protocol (for honest prover Alice and honest
verifier Bob) is \emph{secure with error $\epsilon$} if the
probability that Bob ``accepts'' when any adversary Eve participates
in the protocol as a prover is less than $\epsilon$.
\end{definition}

\noindent The only assumption we make on Eve is that her cheating
strategy is finite in the sense that her quantum computations are
restricted to a finite-dimensional complex vector space; the
dimension itself, though, is unbounded.

We will assume that Eve has always extracted the $r$ black boxes for
$u_{\phi_x}$ from Alice (for all $x=x_1,x_2,\ldots,x_s$), and we
define $t'$ to be the number black boxes that Eve obtained
legitimately (via copies of the public key):
\begin{eqnarray}
t = r + t'.
\end{eqnarray}
Note that Eve can make at most $(r-t')$ attempts at fooling Bob,
i.e., causing Bob to ``accept''. Let $E(a,b)$ denote the event that
Eve fools Bob on her $a$th attempt using $b$ black boxes for
$u_{\phi_x}$ for all $x=x_1,x_2,\ldots,x_s$. Most of the argument,
beginning in Section \ref{sec_indiv_attacks}, is devoted to showing
that
\begin{eqnarray}\label{beginning}
\textrm{Pr}[E(1,t)] &\leq& (1-c/(t+2)^2)^s,
\end{eqnarray}
for some positive constant $c$ defined at the end of Section
\ref{sec_security}. In general, Eve learns something from one
attempt to the next; however, because Eve can simulate her
interaction with Bob at the cost of using one copy of
$\myket{\psi_x}$ per simulated iteration of $\mathcal{K}(x)$, we
have, for $\ell=2,3,\ldots,(r-t')$,
\begin{eqnarray}
&&\textrm{Pr}[E(\ell,t)]
%
\leq\textrm{Pr}[E(1,t+\ell-1)].
\end{eqnarray}
Given this, we use the union bound:
\begin{eqnarray}
&&\textrm{Pr}[\textrm{Eve fools Bob at least once, using $t$
black boxes for $u_{\phi_x}$, $\forall x$}]\hspace{4mm} \\
%
&\leq& \sum_{\ell=1}^{r-t'} \textrm{Pr}[E(\ell, t)]\\
%
%
&\leq& \sum_{\ell=1}^{r-t'} \textrm{Pr}[E(1,t+\ell-1]\\
 &\leq& \sum_{\ell=1}^{r-t'}
(1-c/(t+\ell+1)^2)^s\\
 &\leq& (r-t')(1-c/(2r+1)^2)^s,
\end{eqnarray}
since $t+\ell \leq 2r$. It follows that the probability that Eve can
fool Bob at least once, that is, break the protocol, is
\begin{eqnarray}\label{eqnPbreak}
P_{\textrm{\footnotesize break}} \leq r(1-c/(2r+1)^2)^s,
\end{eqnarray}
which, for fixed $r$, is exponentially small in $s$.  Note that this
bound is likely not tight, since it ultimately assumes that all of
Eve's attempts are equally as powerful.  In particular, this bound
assumes that Eve's state $\ket{\Psi_R(\phi_{x_1}, \phi_{x_2},
\ldots, \phi_{x_s})}$ does not degrade with use. A more detailed
analysis using results about degradation of quantum reference frames
 \cite{BRST06} may be possible.

From Eq. (\ref{eqnPbreak}) follows our main theorem (see Appendix
A.3 for the proof):
\begin{theorem}[Security of the protocol]\label{thm_Main} For any $\epsilon >0$ and any $r \in \mathbf{Z}^+$,
the identification protocol specified in Section \ref{sec_ProtSpec}
is secure with error $\epsilon$ according to Definition
\ref{def_Security} if
\begin{eqnarray}\label{eqn_scal}
s > (2r+1)^2\log(r/\epsilon)/c,
\end{eqnarray}
for some positive constant $c$.
\end{theorem}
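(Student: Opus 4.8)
The plan is to obtain Theorem~\ref{thm_Main} as an immediate consequence of Eq.~(\ref{eqnPbreak}), which already establishes that the probability Eve breaks the protocol satisfies $P_{\textrm{\footnotesize break}} \leq r(1-c/(2r+1)^2)^s$. All that remains is to determine for which $s$ the right-hand side drops below $\epsilon$, and then to invoke Definition~\ref{def_Security}.

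Concretely, I would first bound the base of the exponent using the elementary inequality $1-x \leq e^{-x}$, valid for all real $x$ and in particular for $x = c/(2r+1)^2$ (which lies in $(0,1)$ for the constant $c$ fixed at the end of Section~\ref{sec_security}, so the resulting bound is non-trivial). This gives
\begin{eqnarray}
P_{\textrm{\footnotesize break}} \leq r\,e^{-sc/(2r+1)^2}.
\end{eqnarray}
Next I would impose $r\,e^{-sc/(2r+1)^2} < \epsilon$ and solve for $s$: taking natural logarithms yields $\log r - sc/(2r+1)^2 < \log \epsilon$, i.e. $sc/(2r+1)^2 > \log(r/\epsilon)$, which rearranges to $s > (2r+1)^2\log(r/\epsilon)/c$. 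Since $s$ must be a positive integer, one takes $s$ to be any integer exceeding this threshold; for such $s$ we then have $P_{\textrm{\footnotesize break}} < \epsilon$, and Definition~\ref{def_Security} gives that the protocol specified in Section~\ref{sec_ProtSpec} is secure with error $\epsilon$.

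I do not expect any genuine obstacle here — the argument is two lines once Eq.~(\ref{eqnPbreak}) is available, and the substantive work (deriving (\ref{beginning}), the simulation plus union-bound reduction, and pinning down $c$) has already been done. The only minor points to check are that $c/(2r+1)^2 < 1$ so the exponential bound is meaningful, and that the degenerate regime $\epsilon \geq r$ is consistent: there $\log(r/\epsilon) \leq 0$ so every positive $s$ satisfies (\ref{eqn_scal}), which matches the fact that $P_{\textrm{\footnotesize break}} \leq r \leq \epsilon$ holds trivially in that case.
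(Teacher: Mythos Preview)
Your proposal is correct and takes essentially the same approach as the paper: both derive the threshold on $s$ from Eq.~(\ref{eqnPbreak}) via the elementary inequality $\log(1-x)\le -x$ (equivalently $1-x\le e^{-x}$). The only cosmetic difference is that the paper first takes logarithms and then applies the Taylor-series bound $\log(1-x)=-(x+x^2/2+\cdots)\le -x$, whereas you apply $1-x\le e^{-x}$ before taking logarithms; the resulting bound on $s$ is identical.
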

\noindent The theorem shows how the efficiency of the protocol
scales with its reusability: it suffices to have
\begin{eqnarray}
s \in O(r^2 \log(r/\epsilon)).
\end{eqnarray}

The remainder of the paper establishes the bound in Line
(\ref{beginning}).

\subsection{Sufficiency of individual attacks}\label{sec_indiv_attacks}

At each iteration, we may assume Eve performs some measurement, in
order to get an answer to send back to Bob.  Generally, Eve can
mount a coherent attack, whereby her actions during iteration $j$
may involve systems that she used or will use in previous or future
iterations as well as systems created using black boxes for
$u_{\phi_{x_k}}$ for any $k$---not just for $k = j$.  Since each
$x_j$ is \emph{independently} selected from the set $\{1, 2, \ldots,
2r+1\}$, intuition suggests that Eve's measurement at iteration $j$
may be assumed to be independent of her measurement at any other
iteration and in particular does not need to involve any black boxes
other than ones for $u_{\phi_{x_j}}$. In other words, it seems
plausible that the optimal strategy for Eve can consist of the
``product'' of identical optimal strategies for each iteration
individually.  This intuition can indeed be shown to be correct by
combining a technique from Ref. \cite{Gut09}, for expressing the
maximum output probability in a multiple-round quantum interactive
prototol as a semidefinite program, with a result in Ref.
\cite{MS07}, which implies that the semidefinite program satisfies
the product rule that we need; see Appendix A.1 for a proof.

The remainder of Section \ref{sec_security} establishes the
following proposition:
\begin{prop}\label{prop_MaxGuessProbIndiv} The probability that Eve guesses correctly in any particular
iteration $j$, using $t$ black boxes for $u_{\phi_{x_j}}$, is at
most $(1-c/(t+2)^2)$ for some positive constant $c$.
\end{prop}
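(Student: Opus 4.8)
The plan is to reduce Proposition \ref{prop_MaxGuessProbIndiv} to a concrete optimization over Eve's measurement in a single iteration, and then bound that optimum. Fix an iteration $j$ and write $\phi := \phi_{x_j}$, $\omega := 2\pi/(r+1)$, so $\phi$ takes one of the $r+1$ values $\{0,\omega,2\omega,\ldots,r\omega\}$ uniformly at random. In Step 1 Bob prepares $\ket{0} + (-1)^b e^{i\phi}\ket{1}$ and sends it; Eve holds $t$ black boxes for $u_\phi = \mathrm{diag}(1,e^{i\phi})$ plus her own (state-independent) ancilla, she applies an arbitrary channel, and must output a guess $b''$ for $b$; she fools Bob on this iteration iff $b'' = b$. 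First I would argue that, since $b$ is a uniform bit and the only $\phi$-dependent resources Eve has are the received qubit and the $t$ gate-calls, Eve's best strategy is: coherently apply $u_\phi$ to a fixed probe state $t$ times (interleaved with fixed unitaries) to synthesize a state that depends on $\phi$, tensor it with Bob's qubit, and perform a binary measurement. The received qubit $\ket{0}+(-1)^b e^{i\phi}\ket{1}$ carries \emph{one} extra factor of $e^{i\phi}$ with the unknown sign $(-1)^b$ attached; so altogether Eve is trying to discriminate two states that differ by how the total phase ``$e^{i(t+1)\phi}$-worth'' of amplitude is arranged, and the sign $b$ is only visible on the single qubit from Bob. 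Concretely, up to local unitaries Eve's two hypotheses are $\ket{+}_{\text{probe}(\phi)}\otimes(\ket 0 + (-1)^b e^{i\phi}\ket 1)$ where $\ket{\text{probe}(\phi)}$ is a fixed superposition over the $(t+1)$ ``phase registers'' $\sum_k \alpha_k e^{ik\phi}\ket k$ with $\sum|\alpha_k|^2=1$.

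The key computation is then a two-state discrimination averaged over the $r+1$ equally likely values of $\phi$. By Helstrom, conditioned on $\phi$ the optimal guessing probability is $\tfrac12 + \tfrac14\|\rho_{\phi,0}-\rho_{\phi,1}\|_1$; but Eve must fix one measurement working for all $\phi$, so the quantity to bound from above is $\tfrac12 + \tfrac14\,\big\|\tfrac{1}{r+1}\sum_\phi(\rho_{\phi,0}-\rho_{\phi,1})\big\|$-type expression, or more precisely the optimal value of $\max_{0\le M\le I}\ \tfrac1{r+1}\sum_\phi \mathrm{Tr}\big[M(\tfrac12\rho_{\phi,b=0}+\tfrac12\rho_{\phi,b=1}) \text{ weighted by }\pm\big]$. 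The crucial point is that the ``$b=0$'' and ``$b=1$'' branches differ only by the sign on the $\ket 1$-component of Bob's single qubit. When we average over $\phi$ uniformly over $(r+1)$th roots of unity, the cross terms $e^{i(k-k')\phi}$ kill all contributions except those where the phase-count matches. I expect this to collapse the discrimination bias to something like $C/(t+2)$ per distinguishable ``slot'', and squaring (because guessing probability enters as $\tfrac12+\tfrac12(\text{bias})$ and we want $1 - c/(t+2)^2$) gives the claimed $(1-c/(t+2)^2)$. I would make this rigorous either by an explicit eigen-analysis of the average difference operator on the $(t+2)$-dimensional relevant subspace, or — cleaner — by a direct SDP/Lagrange-dual argument exhibiting a dual-feasible point certifying the bound; the Fourier/roots-of-unity averaging is what forces the $1/(t+2)^2$ scaling.

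The main obstacle is handling the fully general coherent probe: a priori Eve may apply $u_\phi$ in a non-trivial controlled fashion, entangle with large ancillas, and measure with a general POVM, so one must first show no such generality helps beyond the ``$\sum_k\alpha_k e^{ik\phi}\ket k$'' normal form — essentially that $t$ queries to $u_\phi$ can produce only phases $e^{ik\phi}$ with $0\le k\le t$ and that coherent control and ancillas are absorbable into the $\alpha_k$. This is a standard (but not entirely trivial) ``query-to-polynomial'' style lemma, and it is the step I'd budget the most care for; once the normal form is in hand, the roots-of-unity averaging and the Helstrom bound are essentially mechanical. A secondary subtlety is that Proposition \ref{prop_MaxGuessProbIndiv} only needs the single-iteration statement — the reduction from coherent multi-round attacks to this single-iteration optimum is already granted by the SDP-product-rule argument cited from Refs. \cite{Gut09, MS07} in Appendix A.1, so I may take that as given and focus entirely on the one-shot bound above.
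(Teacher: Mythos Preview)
Your route is essentially the paper's: single-iteration reduction (granted), polynomial-method normal form for Eve's probe, average over $\phi$ to block-diagonalize by total phase weight, then a binary discrimination. Two things need fixing before it actually goes through.

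First, the Fourier/roots-of-unity averaging you invoke requires the number of equispaced phases to strictly exceed the maximal phase-weight difference, which here is $t+1$ (the probe contributes degrees $0,\ldots,t$ and Bob's qubit adds one more). Since $t$ can be as large as $2r-1$, your $r{+}1$ phases are not enough: the cross terms $e^{i(k-k')\phi}$ do \emph{not} all cancel. The protocol in fact uses $2r{+}1$ phases (the ``$\phi_x=2\pi x/(r{+}1)$'' in the kernel is inconsistent with the public-key definition and the discrete/continuous-equivalence section, which both use $2r{+}1$); with $2r{+}1>t{+}1$ the averaging is then equivalent to integrating over continuous $\phi\in[0,2\pi)$, which is how the paper proceeds.

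Second, your ``bias $\sim C/(t{+}2)$, then square'' heuristic is not where the $1/(t{+}2)^2$ comes from. After block-diagonalizing, each weight-$w$ block ($1\le w\le t{+}1$) is two-dimensional, spanned by $|w\rangle|0\rangle$ and $|w{-}1\rangle|1\rangle$, and the success probability collapses to $\tfrac12+\tfrac14\,\langle\psi|M_t|\psi\rangle$ with $|\psi\rangle=\sum_k\alpha_k|k\rangle$ and $M_t=\sum_{j=0}^{t-1}(|j{+}1\rangle\langle j|+|j\rangle\langle j{+}1|)$ the adjacency matrix of a path on $t{+}1$ vertices. Its largest eigenvalue is $2\cos(\pi/(t{+}2))$, so the optimum over the $\alpha_k$ is exactly $\tfrac12(1+\cos(\pi/(t{+}2)))$, and the $1/(t{+}2)^2$ is simply the small-angle expansion of $1-\cos$. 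This tridiagonal eigenproblem---together with the prior step reducing the \emph{general} probe $\sum_j\beta_j e^{ij\phi}|\tilde g_j\rangle$ (with non-orthogonal $|\tilde g_j\rangle$) to one with orthonormal $|k\rangle$, which is the content of Lemma~\ref{lem_Main} and Proposition~\ref{prop_Maximization}---is the concrete computation your sketch still owes.
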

\noindent Assuming Proposition \ref{prop_MaxGuessProbIndiv}, the
result proved in Appendix A.1 implies that the probability of Eve's
guessing correctly in all $s$ iterations, using $t$ black boxes for
$u_{\phi_x}$, for $x = x_1,x_2,\ldots,x_s$, is at most
$(1-c/(t+2)^2)^s$, establishing the bound in Line (\ref{beginning}).

\subsection{Equivalence of discrete and continuous private phases}

To help us prove Proposition \ref{prop_MaxGuessProbIndiv}, we now
show that, from Bob's and Eve's points of view, Alice's choosing the
private phase angle $\phi_x$ from the discrete set $\{2 \pi
x/(2r+1): x=1,2,\ldots,2r+1\}$ is equivalent to her choosing the
phase angle from the continuous interval $[0,2\pi)$. We have argued
that the only information that Eve or Bob---or anyone but
Alice---has about $\phi_x$ may be assumed to come from a number of
black boxes for $u_{\phi_x}$ that can be no greater than $2r$ (there
are $r$ legitimate copies of the public key, and one can extract $r$
more black boxes from Alice); let this number be $d$, where $1 \leq
d \leq 2r$.

In order to access the information from the black boxes, they must,
in general, be used in a quantum circuit in order to create some
state. Using the $d$ black boxes, the most general (purified) state
that can be made is without loss of generality of the form
\begin{eqnarray}
\ket{\psi(\phi_x)} = \sum_{k=0}^{N-1} \left( \sum_{j=0}^{d}
\beta_{j,k} e^{i j \phi_x} \right) \myket{a_k},
\end{eqnarray}
where $\{\myket{a_k}: k = 0,1,..., N-1\}$ is an orthonormal basis of
arbitrary but finite size (the assumption of finite $N$ comes from
our restricting Eve to using only finite cheating strategies). In
general, the numbers $N$ and $\beta_{j,k}$ may depend on $d$. Here
we have followed Ref. \cite{DAEMM07} by noting that each amplitude
is a polynomial in $e^{i \phi_x}$ of degree at most $d$; this fact
follows from an inductive proof just as in Ref. \cite{BBCMW98},
where the polynomial method is applied to an oracle revealing one of
many Boolean variables.

Averaging over Alice's random choices of $x$, one would describe the
previous state by the density operator
\begin{eqnarray}
\frac{1}{2r+1}\sum_{x = 1}^{2r+1}
\ketbra{\psi(\phi_x)}{\psi(\phi_x)},
\end{eqnarray}
since $x$ is chosen uniformly randomly from $\{1,2,\ldots,2r+1\}$.
Had $\phi_x$ been chosen uniformly from $\{2 \pi x/(2r+1): x\in
[0,2r+1)\} = [0,2\pi)$, one would describe the state by
\begin{eqnarray}
\int_0^{2 \pi} \frac{d\phi}{2\pi} \ketbra{\psi(\phi)}{\psi(\phi)}.
\end{eqnarray}
It is straightforward to show\footnote{This requires the following
two facts: (1) for any integer $a$,
\begin{eqnarray}
\frac{1}{2\pi}\int_{0}^{2\pi}  e^{i a \theta} d\theta= \left\{
\begin{array}{ll}
                0 & \mbox{ if $a\neq 0$ }, \\
                1 & \mbox{ otherwise     };
                \end{array}
                \right.
\end{eqnarray}
and (2) for any integer $p\geq 2$ and integer $a$:
\begin{eqnarray}
\frac{1}{p}\sum_{k=1}^p e^{2 \pi i a k/p} =\left\{
\begin{array}{ll}
                0 & \mbox{ if $a$ is not a multiple of $p$}, \\
                1 & \mbox{ otherwise     },
                \end{array}
                \right.
\end{eqnarray}
where the second fact is applied at $p=2r+1$.}~that the above two
density operators are both equal to
\begin{eqnarray}
\sum_{k, k' = 0}^{N-1} \sum_{j = 0}^{d} \beta_{j,k} \beta_{j, k'}^*
\ketbra{a_{k}}{a_{k'}}.
\end{eqnarray}
Thus, without loss of generality, we may drop the subscript ``$x$''
on ``$\phi_x$'', write ``$\phi$'' for Alice's private phase angle,
and assume she did (somehow) choose $\phi$ uniformly randomly from
$[0,2\pi)$.\footnote{One way to interpret this result is that even
if Alice encodes infinitely many bits into $\phi$, it is no better
than if she encoded $\lceil \log_2(2r+1) \rceil$ bits. Note that if
Eve performs an optimal phase estimation \cite{vDdAEMM07} in order
to learn $\phi$ and then cheat Bob, she can only learn at most
$\lfloor \log_2(2r-1)\rfloor$ bits of $\phi$ (here, we assume Eve
has $2r-1$ copies of the public key, having left Bob one copy),
whereas Alice actually encoded $\lceil \log_2(2r+1) \rceil$ bits
into $\phi$.}
 ~We are now ready to prove Proposition \ref{prop_MaxGuessProbIndiv}.

\subsection{Bound on relative phase shift estimation}

Eve's task of cheating in one iteration of the kernel may be phrased
as follows.  Eve is to decide the difference between the relative
phases encoded in two subsystems $R$ and $S$, where $S$ is a given
one-qubit system and $R$ is under her control. The given subsystem
$S$ is in the state
\begin{eqnarray}
\ket{\psi_S(\phi,\theta)} = \myket{0} + e^{i(\phi +
\theta)}\myket{1},
\end{eqnarray}
where $\theta$ is unknown and uniformly random in $\{0, \pi\}$, and
$\phi$ is unknown and uniformly random in
$[0,2\pi]$.  Eve can make the state $\ket{\psi_R(\phi)}$ of
subsystem $R$ by using arbitrary operations interleaved with at most
$t$ black boxes for the one-qubit gate $u_\phi$. Note that the
problem is nontrivial because $\phi$ is unknown and uniformly random
and the qubit $S$ is given to Eve \emph{after} she has used all her
black boxes. We seek the optimal success probability for Eve to
guess $\theta$ correctly.

Eve's estimation problem can be treated within the framework of
quantum estimation of group transformations \cite{CAS05}.  As such,
we regard her problem as finding the optimal measurement
(probability) to correctly distinguish the states in the two-element
orbit
\begin{eqnarray}
\{V_\theta \rho V_\theta^\dagger: \theta \in \{0,\pi\}\},
\end{eqnarray}
where $V_\theta = I_R \otimes (\ketbra{0}{0}+e^{i \theta}
\ketbra{1}{1})$ and
\begin{eqnarray}
\rho = \int \frac{d\phi}{2\pi}
\myketbra{\psi_R(\phi)}{\psi_R(\phi)}\otimes
\myketbra{\psi_S(\phi,0)}{\psi_S(\phi,0)}.
\end{eqnarray}
The probabilities of her estimation procedure can be assumed to be
generated by a POVM $\{E_0, E_\pi\}$.  In general, it is known how
to solve for the POVM that performs optimally on average when the
unitarily-generated orbit consists of pure states, but not when the
orbit is generated from a mixed state ($\rho$, in our case). Thus,
we now effectively reduce the problem to several instances of an
estimation problem where the orbit is pure.

Indeed, suppose that $\ket{\psi_R(\phi)}$ were a state on $q$ qubits
that satisfied the property
\begin{eqnarray}\label{eqn_cov_sig_states}
\myketbra{\psi_R(\phi)}{\psi_R(\phi)} = (u_\phi)^{\otimes
q}\myketbra{\psi_R(0)}{\psi_R(0)}(u_\phi^\dagger)^{\otimes q}
\end{eqnarray}
for all $\phi \in [0,2\pi]$.  Then, letting $U_\phi \equiv
(u_\phi)^{\otimes (q+1)}$ and $\ket{\psi_{RS}(\phi,\theta)}\equiv
\myket{\psi_R(\phi)}\myket{\psi_S(\phi,\theta)}$, we would have that
\begin{eqnarray}
\rho &=& \int \frac{d\phi}{2\pi} U_\phi \myket{\psi_{RS}(0,0)}
 \mybra{\psi_{RS}(0,0)}U_\phi^\dagger\\
&=& \sum_w P_w  \myket{\psi_{RS}(0,0)}
 \mybra{\psi_{RS}(0,0)} P_w\label{expressionForRho}\\
 &=& \sum_w P_w  \rho P_w,
\end{eqnarray}
where $P_w$ is the projection onto the subspace of Hamming weight $w
= 0,1,\dots, q+1$, and we used the formulas $U_\phi = \sum_w P_w
e^{i w \phi}$ and $\delta_{w,0}=\int (d\phi/2\pi)e^{i w \phi}$. In
other words, the state $\rho$ would be block diagonal with respect
to the direct-sum decomposition of the total state space of $R$ into
subspaces of constant Hamming weight $w$. Then we would have that
the probability that Eve guesses $\theta = \theta'$ given that
$\theta = \theta''$ is
\begin{eqnarray}
\textrm{Pr}[\textrm{Eve guesses }\theta=\theta' | \theta=\theta'']
&=& \tr\left[E_{\theta'} \left( V_{\theta''} \rho V_{\theta''}^\dagger \right)\right]\\
&=& \tr\left[E_{\theta'} V_{\theta''} \sum_w P_w \rho P_w V_{\theta''}^\dagger \right]\\
%
%
&=& \tr \left[ \left( \bigoplus_{w} E_{w, \theta'} \right)  \left(
V_{\theta''} \rho V_{\theta''}^\dagger \right)
 \right],
%
%
\end{eqnarray}
where $E_{w,\theta'} \equiv P_w E_{\theta'} P_w$, and we used
cyclicity of trace and the fact that $V_\theta$ and $P_w$ commute.
Thus, the elements of Eve's POVM $\{E_0, E_\pi\}$ would without loss
of generality have the same block diagonal structure as $\rho$.  In
principle, this would allow Eve to measure first (just) the Hamming
weight of $\rho$ in order to find $w$, and then deal with the group
transformation estimation problem with respect to the pure orbit
\begin{eqnarray}
\mathcal{O}_w \equiv \{V_\theta \ket{\Psi_w} : \theta \in
\{0,\pi\}\},
\end{eqnarray}
where $\ket{\Psi_w}$ is the state such that $\ket{\Psi_w}\propto P_w
\myket{\psi_{RS}(0,0)}$; we note that $\ket{\Psi_w}$ is independent
of $\phi$ (and $\theta$). The following lemma shows that, without
loss of generality, we may assume that the situation just described
is indeed the case:

\begin{lemma}\label{lem_Main}
Without loss of generality, Eve's state $\myket{\psi_R(\phi)}$,
which she prepares with at most $t$ black boxes for $u_\phi$, may be
assumed to be on $q = (2t+1)$ qubits and satisfy
\begin{eqnarray}
\myketbra{\psi_R(\phi)}{\psi_R(\phi)} = (u_\phi)^{\otimes
q}\myketbra{\psi_R(0)}{\psi_R(0)}(u_\phi^\dagger)^{\otimes q}
\end{eqnarray}
 for all
$\phi \in [0,2\pi]$ .
\end{lemma}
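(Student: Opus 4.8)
\medskip\noindent\emph{Proof plan.}
The plan is to exploit the reduction already in place: in one iteration Eve must discriminate $\rho$ from $V_\pi\rho V_\pi^\dagger$ under a uniform prior, where $V_\pi=\id_R\otimes Z$ acts on $S$; so by the Helstrom bound her optimal success probability equals $\tfrac12+\tfrac14\|\rho-V_\pi\rho V_\pi^\dagger\|_1$. I would show that this trace norm --- hence Eve's success probability --- depends on her state $\myket{\psi_R(\phi)}$ only through a single nonnegative number, and that this number is maximised by a covariant state; padding with $\myket{0}$'s then puts that state into the $q=(2t+1)$-qubit form asserted.

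First I would put Eve's state into the polynomial form established above, $\myket{\psi_R(\phi)}=\sum_{j=0}^{t}e^{ij\phi}\myket{\beta_j}$, the degree being at most $t$ (the number of black boxes); normalisation of $\myket{\psi_R(\phi)}$ for all $\phi$ forces $\sum_j\|\beta_j\|^2=1$. Writing $\myket{\psi_S(\phi,0)}\propto\myket{0}+e^{i\phi}\myket{1}$, multiplying out, and averaging over $\phi$ (the orthogonality of the exponentials quoted above kills every term off-diagonal in the power of $e^{i\phi}$), one obtains
\[
\rho=\tfrac12\sum_{m=0}^{t+1}\myketbra{\gamma_m}{\gamma_m},\qquad \myket{\gamma_m}=\myket{\beta_m}_R\myket{0}_S+\myket{\beta_{m-1}}_R\myket{1}_S,
\]
with the convention $\beta_{-1}=\beta_{t+1}=0$. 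A short calculation then shows that in $\rho-V_\pi\rho V_\pi^\dagger$ the parts diagonal in $S$ cancel while the off-diagonal parts add, leaving
\[
\rho-V_\pi\rho V_\pi^\dagger=A\otimes(\myket{0}\mybra{1})_S+A^\dagger\otimes(\myket{1}\mybra{0})_S,\qquad A:=\sum_{j=0}^{t-1}\myketbra{\beta_{j+1}}{\beta_j}.
\]
Since a block-antidiagonal operator with corner blocks $A$ and $A^\dagger$ has trace norm $2\|A\|_1$, Eve's optimal success probability in one iteration is $\tfrac12+\tfrac12\|A\|_1$: it depends on the vectors $\myket{\beta_j}$ only through the ``shift operator'' $A$ assembled from them.

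The last step, which is the crux, is to bound $\|A\|_1$ and observe that a covariant state attains the bound. By the triangle inequality for the trace norm,
\[
\|A\|_1\le\sum_{j=0}^{t-1}\|\myketbra{\beta_{j+1}}{\beta_j}\|_1=\sum_{j=0}^{t-1}\|\beta_{j+1}\|\,\|\beta_j\|,
\]
with equality whenever the $\myket{\beta_j}$ are mutually orthogonal. Now the $t$-qubit state $\myket{\psi_R(\phi)}:=\sum_{j=0}^{t}\|\beta_j\|\,e^{ij\phi}\,\myket{1^{j}0^{t-j}}$ has mutually orthogonal components of exactly these norms, is a legitimate cheating state (a $\phi$-independent preparation followed by one call to $u_\phi$ on each of its $t$ qubits), and satisfies $\myket{\psi_R(\phi)}=(u_\phi)^{\otimes t}\myket{\psi_R(0)}$, because $(u_\phi)^{\otimes t}$ multiplies a weight-$w$ computational-basis vector by $e^{iw\phi}$. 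Hence this covariant state does at least as well for Eve as her original one, so we may assume her state is covariant; tensoring it with $\myket{0}^{\otimes(t+1)}$, on which $(u_\phi)^{\otimes(t+1)}$ acts trivially, puts it on $q=2t+1$ qubits in exactly the form in the statement (in fact $q=t$ already suffices). The step I expect to require the most care is the cancellation computation collapsing $\rho-V_\pi\rho V_\pi^\dagger$ to the clean block-antidiagonal form above; granted that, the covariance reduction is an immediate consequence of the triangle inequality.
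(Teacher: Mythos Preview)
Your argument is correct, but it follows a genuinely different route from the paper's. The paper proves the lemma by a \emph{reversibility} argument: it rewrites Eve's general state as $\sum_j \beta_j e^{ij\phi}\myket{\tilde g_j}$, Gram--Schmidt orthonormalises the $\myket{\tilde g_j}$, and then exhibits an explicit covariant state on registers $A\otimes R'$ (symmetric states $\myket{S^t_j}$ tensored with constant-weight markers $\myket{c_h}$) from which Eve can \emph{recover} her original state up to a uniformly random shift of $\phi$, via an inverse QFT on $A$ and a phase correction on $S$. Since $\phi$ is uniform anyway, this preprocessing loses nothing, and the specific form (with the coefficients $\alpha_{j,h}=\beta_j\gamma_{j,h}$ still present) is then carried into Proposition~\ref{prop_DerivationOfPOVM} and only optimised away in Proposition~\ref{prop_Maximization}.

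You instead go straight to the figure of merit: Helstrom gives success probability $\tfrac12+\tfrac12\|A\|_1$ with $A=\sum_j\myketbra{\beta_{j+1}}{\beta_j}$, and the triangle inequality $\|A\|_1\le\sum_j\|\beta_{j+1}\|\,\|\beta_j\|$ (saturated when the $\myket{\beta_j}$ are orthogonal) shows that replacing Eve's state by the covariant $t$-qubit state $\sum_j\|\beta_j\|e^{ij\phi}\myket{1^j0^{t-j}}$ can only help her. This is cleaner and in fact stronger than what the lemma asserts: you collapse the optimisation to the norms $\|\beta_j\|$ in one step, effectively doing the work of Propositions~\ref{prop_DerivationOfPOVM} and~\ref{prop_Maximization} simultaneously, and you show $q=t$ already suffices (the padding to $2t+1$ is cosmetic). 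What the paper's approach buys is modularity --- its covariant form plugs directly into the group-estimation/POVM machinery of Appendix~A.2 --- whereas your Helstrom route bypasses that machinery entirely. Both are valid; yours is the shorter path to Proposition~\ref{prop_MaxGuessProbIndiv}.
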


\begin{proof}
As noted in the previous section, using the $t$ black boxes, the
most general (purified) state of $R$ that Eve can make is without
loss of generality
\begin{eqnarray}\label{eqn_MostGenState}
\sum_{k=0}^{N-1} \left( \sum_{j=0}^{t} \beta_{j,k} e^{i j \phi}
\right) \myket{a_k}_R,
\end{eqnarray}
where, again, $N$ is a priori unknown but finite (we use subscripts
on the kets in this proof to indicate the physical systems).  Note
that we can rewrite the state in Eq. (\ref{eqn_MostGenState}) by
changing the order of the summations as
\begin{eqnarray}\label{eqn_MostGenStateReWritten}
\sum_{j=0}^{t} \beta_j e^{i j \phi} \myket{\tilde{g}_j}_R,
\end{eqnarray}
where we have defined the numbers $\beta_j$ and the
not-necessarily-orthogonal set of unit vectors $\{
\myket{\tilde{g}_j} : j=0,1,...,t\}$ such that
\begin{eqnarray}
\beta_j \myket{\tilde{g}_j}_R = \sum_{k=0}^{N-1} \beta_{j,k}
\myket{a_k}_R .
\end{eqnarray}
Using the Gram-Schmidt orthonormalization procedure on
$\{\myket{\tilde{g}_j}\}_j$ to get the orthonormal set
$\{\myket{g_j}\}_j$, we can write
\begin{eqnarray}
\myket{\tilde{g}_j}_R  = \sum_{h=0}^t \gamma_{j,h} \myket{g_h}_R.
\end{eqnarray}
Introduce a new system $R'$ consisting entirely of qubits and define
$U$ to be any unitary map acting on $R \otimes R'$ that takes
$\myket{0}_R \myket{c_h}_{R'} \mapsto \myket{g_h}_R \myket{0}_{R'}$,
where $\{ \myket{c_h}_{R'} \}_{h=0,1,\ldots,t}$ is an orthonormal
set of size $t+1$ with elements that are computational basis states
whose labels have constant Hamming weight; note that $R'$ needs only
$O(\log(t+1))$ qubits whereas $R$ is of unknown (but finite) size
(however, following this proof, we will construct $R'$ using $t+1$
qubits, as this makes things simpler). We first claim that, without
loss of generality,
\begin{eqnarray}\label{eqn_WLOG_RFstate}
\ket{\psi_R(\phi)}=\sum_{j,h} \beta_j  \gamma_{j,h} e^{i
j\phi}\myket{S^t_j}_{A} \myket{c_h}_{R'},
\end{eqnarray}
where $A$ is a $t$-qubit ancilla, and $\myket{S^t_j}_A$ is the
symmetric state of weight $j$.~
%
%
%
 To see this, note that
Eve's optimal measurement can include the following pre-processing
operations (in sequence), so that she recovers the most general
state in Eq. (\ref{eqn_MostGenState}) (and Eq.
(\ref{eqn_MostGenStateReWritten})) on $R$ but for a different random
value of $\phi$:
\begin{itemize}

\item add an ancillary register $R$ in state $\myket{0}_{R}$ in
between the two registers $A$ and $R'$ and perform $U$ on $R\otimes
R'$ to get (after throwing out system $R'$)
\begin{eqnarray}
\sum_j \beta_j \sum_h \gamma_{j,h} e^{i j\phi}\myket{S^t_j}_{A}
\myket{g_h}_R
 = \sum_j \beta_j e^{i j\phi}\myket{S^t_j}_{A} \myket{\tilde{g}_j}_R
\end{eqnarray}

\item on $A$, do the $(t+1)$-dimensional inverse quantum Fourier transform in the
symmetric basis on $A$, i.e. mapping
\begin{eqnarray}
\myket{S^t_j}_A \mapsto \frac{1}{\sqrt{t+1}} \sum_y e^{- i 2 \pi y
j/(t+1)}\ket{S^t_y}_A,
\end{eqnarray}
to get
\begin{eqnarray}
\sum_j \sum_y \beta_j e^{i j(\phi - 2\pi y/(t+1))}\myket{S^t_y}_{A}
\myket{\tilde{g}_j}_R
\end{eqnarray}
and measure the Hamming weight of $A$ to get result $y_0$, which
leaves the state (after throwing out system $A$)
\begin{eqnarray}
\sum_j \beta_j e^{i j(\phi - 2\pi y_0/(t+1))} \myket{\tilde{g}_j}_R
\end{eqnarray}

\item correct the relative phase on qubit $S$ by $2 \pi y_0/(t+1)$.

\end{itemize}
Doing these operations does not change the estimation problem, since
$\phi$ is uniformly random anyway; these operations just change the
unknown $\phi$ to $\phi' = \phi - 2 \pi y_0 /(t+1)$.

Finally, note that Eq. (\ref{eqn_WLOG_RFstate}) implies that
$\ket{\psi_R(\phi)}$ can be made from $\ket{\psi_R(0)}$ with at most
$t$ black boxes for $u_\phi$, by applying $(u_\phi)^{\otimes t}$ on
the $t$ qubits of system $A$, and note that $\ket{\psi_R(\phi)}$
satisfies Eq. (\ref{eqn_cov_sig_states}), since the states
$\myket{c_h}$ are of constant Hamming weight.
\end{proof}

\begin{remark}[Quantum Fourier transform as analytical tool] Note that Eve's optimal strategy is not necessarily
to measure $R$ to get an estimate $\phi'$ of $\phi$ first, then
apply $u_{-\phi'}$ on $S$, and then measure $S$ to estimate
$\theta$. However, the operation that is optimal for estimating
$\phi$ (see Ref. \cite{DAEMM07}), i.e. the inverse quantum Fourier
transform applied above, is still useful as an analytical tool in
order to derive (a convenient form of) an optimal state for her
estimation of $\theta$.
\end{remark}

Thus, by Lemma \ref{lem_Main}, we assume Eq.
(\ref{eqn_WLOG_RFstate}) holds, which allows us to derive the
following proposition.  For convenience, we define
\begin{eqnarray}
\alpha_{j,h}\equiv \beta_j \gamma_{j,h}.
\end{eqnarray}
\begin{prop}\label{prop_DerivationOfPOVM}
The elements of the POVM $\{E_0,E_\pi\}$ are without loss of
generality defined as
\begin{eqnarray}
E_0 &=& \myket{\Xi_{0}}\ket{0}\mybra{\Xi_0}\bra{0}
+\sum_{w=2}^{t+1}\ketbra{w,+}{w,+} \\
E_\pi &=& \sum_{w=2}^{t+1}\ketbra{w,-}{w,-}
+\myket{\Xi_t}\ket{1}\mybra{\Xi_t}\bra{1},
%
\end{eqnarray}
where
%
\begin{eqnarray}
\myket{w,\pm} \equiv \frac{1}{\sqrt{2}} (\ket{\Xi_{w-1}}\ket{0} \pm
\ket{\Xi_{w-2}}\ket{1}),
\end{eqnarray}
 and $\myket{\Xi_{w-1}}$ and $\myket{\Xi_{w-2}}$ are
states such that, for $j=0,1,\ldots,t$,
%
\begin{eqnarray}\label{def_Xis}
\myket{\Xi_{j}} \propto \sum_h
\frac{\alpha_{j,h}}{\sqrt{2}}\myket{S^t_j}\myket{c_h}.
\end{eqnarray}
\end{prop}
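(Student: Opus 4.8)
The plan is to exploit the block-diagonal structure of $\rho$ established just before the statement of Proposition~\ref{prop_DerivationOfPOVM}. By Lemma~\ref{lem_Main} we may assume $\ket{\psi_R(\phi)}$ satisfies the covariance property (\ref{eqn_cov_sig_states}), so that $\rho$ is block-diagonal with respect to the total Hamming weight $w$ of the $(q+1)$-qubit system $RS$, with $q = 2t+1$; moreover, since $\myket{S^t_j}_A$ has weight $j$ and $\ket{\psi_S(\phi,0)}$ contributes weight $0$ or $1$, the only weights that actually occur are $w = 0, 1, \ldots, t+1$. First I would compute, for each block $w$, the (unnormalized) component $P_w \ket{\psi_{RS}(0,0)}$. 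Writing the state of $RS$ at $\phi = 0$ as $\sum_{j,h}\alpha_{j,h}\myket{S^t_j}_A\myket{c_h}_{R'}(\myket{0}_S + \myket{1}_S)$, one sees that weight $w$ receives a contribution $\ket{\Xi_{w}}\ket{0}$ from the ``$\myket{0}_S$'' branch at $j=w$ and a contribution $\ket{\Xi_{w-1}}\ket{1}$ from the ``$\myket{1}_S$'' branch at $j = w-1$, where $\ket{\Xi_j}$ is the normalized version of $\sum_h (\alpha_{j,h}/\sqrt{2})\myket{S^t_j}\myket{c_h}$ as in (\ref{def_Xis}). Hence $\ket{\Psi_w} \propto \ket{\Xi_{w-1}}\ket{0} + \ket{\Xi_{w-2}}\ket{1}$ for the ``interior'' weights $2 \le w \le t+1$ (after relabelling indices), while the extreme blocks $w = 0$ and $w = t+2$ are one-dimensional, containing only $\ket{\Xi_0}\ket{0}$ and $\ket{\Xi_t}\ket{1}$ respectively.

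Next I would handle each block separately as a pure-state binary discrimination problem for the orbit $\mathcal{O}_w = \{V_\theta\ket{\Psi_w} : \theta \in \{0,\pi\}\}$, which is legitimate because, as shown before the statement, Eve's optimal POVM $\{E_0,E_\pi\}$ may be taken block-diagonal, so she may first measure $w$ and then run an optimal two-outcome measurement inside block $w$. For the two one-dimensional blocks, $V_0$ and $V_\pi$ act identically (up to irrelevant global phase) on $\ket{\Xi_0}\ket{0}$ and on $\ket{\Xi_t}\ket{1}$ — a pure $\myket{0}_S$ or pure $\myket{1}_S$ state is unchanged by the controlled phase on $S$ — so $\theta$ is unlearnable there; the natural choice is to assign block $w=0$ deterministically to the guess $\theta=0$ (contributing $\myket{\Xi_0}\ket{0}\mybra{\Xi_0}\bra{0}$ to $E_0$) and block $w=t+2$ deterministically to $\theta=\pi$ (contributing $\myket{\Xi_t}\ket{1}\mybra{\Xi_t}\bra{1}$ to $E_\pi$). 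For each interior block $2\le w\le t+1$, $V_\theta$ multiplies the $\myket{1}_S$ component by $e^{i\theta}$, so $V_0\ket{\Psi_w}\propto\ket{\Xi_{w-1}}\ket{0}+\ket{\Xi_{w-2}}\ket{1}$ and $V_\pi\ket{\Psi_w}\propto\ket{\Xi_{w-1}}\ket{0}-\ket{\Xi_{w-2}}\ket{1}$; since $\ket{\Xi_{w-1}}\ket{0}$ and $\ket{\Xi_{w-2}}\ket{1}$ are orthogonal (different $S$-values), these two states are a symmetric pair, and the Helstrom-optimal measurement for distinguishing two equiprobable pure states is the projective measurement onto their ``$\pm$'' superpositions $\ket{w,\pm} = \tfrac{1}{\sqrt2}(\ket{\Xi_{w-1}}\ket{0}\pm\ket{\Xi_{w-2}}\ket{1})$, with $\ket{w,+}$ assigned to $E_0$ and $\ket{w,-}$ to $E_\pi$. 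Summing these block contributions gives exactly the claimed $E_0$ and $E_\pi$, and one checks $E_0 + E_\pi = \I$ on the relevant support, so $\{E_0,E_\pi\}$ is a valid POVM.

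The main thing to get right — and the step I expect to be the subtlest — is the bookkeeping that pins down \emph{which} $\ket{\Xi_j}$ pair appears in block $w$ and why the relabelling in the statement (using $\ket{\Xi_{w-1}},\ket{\Xi_{w-2}}$) is consistent at the block boundaries, in particular that the lowest interior block $w=2$ uses $\ket{\Xi_1},\ket{\Xi_0}$ and the highest $w=t+1$ uses $\ket{\Xi_t},\ket{\Xi_{t-1}}$, so that $\ket{\Xi_0}$ and $\ket{\Xi_t}$ are precisely the vectors left over for the two one-dimensional blocks. I would also want to state carefully why a \emph{projective} measurement inside each block is without loss of generality optimal among all POVMs on that block — this is the standard fact that optimal discrimination of two pure states is achieved by a rank-one orthogonal measurement — and why the ``deterministic guess'' choice on the extreme blocks is optimal there (any POVM does equally well since the two hypotheses coincide, so we are free to fold those projectors into $E_0$ and $E_\pi$ as stated). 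Everything else is the direct computation of $P_w\myket{\psi_{RS}(0,0)}$ sketched above, which is routine once the index conventions are fixed; the actual value of the resulting success probability, and hence the bound $1 - c/(t+2)^2$ of Proposition~\ref{prop_MaxGuessProbIndiv}, is deferred to the subsequent analysis and is not needed here.
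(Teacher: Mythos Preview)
Your approach is structurally the same as the paper's: block-diagonalize by total Hamming weight, compute $P_w\ket{\psi_{RS}(0,0)}$ in each block, and then fix the optimal two-outcome measurement block by block, with the extreme (one-dimensional) blocks assigned arbitrarily. The one substantive difference is how you pin down the interior-block measurement. The paper first argues that the block POVM may be taken covariant ($E_{w,\pi}=V_\pi E_{w,0}V_\pi^\dagger$), then invokes Chiribella et al.\ to reduce $E_{w,0}$ to rank one, and finally solves for its form from the completeness relation $E_{w,0}+E_{w,\pi}=I_{\sp(\mathcal{O}_w)}$; you instead appeal directly to Helstrom's theorem for optimal discrimination of two equiprobable pure states. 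Your route is more elementary and self-contained; the paper's route ties the problem to the general covariant-estimation machinery. Both yield the same $\ket{w,\pm}$ projectors (note that Helstrom gives exactly these projectors even though the two components of $\ket{\Psi_w}$ need not have equal amplitude, since the eigenvectors of $\ketbra{\Psi_w}{\Psi_w}-V_\pi\ketbra{\Psi_w}{\Psi_w}V_\pi^\dagger$ are $\ket{w,\pm}$ regardless).

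One bookkeeping slip, precisely in the place you flagged as delicate: you say the occupied weights are $0,\ldots,t+1$ but then name the extreme blocks $w=0$ and $w=t+2$, which is inconsistent. In the paper's proof the $\ket{c_h}$ on $R'$ are specialized (inside the proof of this proposition, not in Lemma~\ref{lem_Main}) to be the $t+1$ weight-$1$ computational basis states on $t+1$ qubits, so the total weight of $\ket{S^t_j}\ket{c_h}\ket{b}$ is $j+1+b$ and the occupied weights are $1,\ldots,t+2$; the lower extreme block is $w=1$, carrying $\ket{\Xi_0}\ket{0}$. Since all $\ket{c_h}$ share the same weight this is only a constant shift of the index $w$ and does not affect the argument, but your labelling should be made consistent (either $w=0,\ldots,t+1$ throughout before relabelling, or $w=1,\ldots,t+2$ throughout after).
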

\noindent The proof of Proposition \ref{prop_DerivationOfPOVM} is
similar to the argument given in Ref. \cite{BRST06} and is given in
Appendix A.2. The total success probability of Eve's strategy can
now be computed as
\begin{eqnarray}
&&\sum_{\theta' \in \{0,\pi\}}\textrm{Pr}[\textrm{Eve guesses
}\theta=\theta' | \theta=\theta']\textrm{Pr}[\theta=\theta']\\
&=& \frac{1}{2} \sum_{\theta' \in \{0,\pi\}} \tr (E_{\theta'}
V_{\theta'} \rho  V_{\theta'}^\dagger) \\
&=& \frac{1}{2} \sum_{\theta' \in \{0,\pi\}} \tr (E_{\theta'}
V_{\theta'} \myket{\psi_{RS}(0,0)}
 \mybra{\psi_{RS}(0,0)}  V_{\theta'}^\dagger) \\
&=&\label{line_tomaximize}\frac{1}{2} +
\frac{1}{4}\mybra{\psi_R(0)}M_t \myket{\psi_R(0)},
\end{eqnarray}
where
\begin{eqnarray}
M_t \equiv \sum_{j=0}^{t-1} \myketbra{\Xi_{j+1}}{\Xi_{j}}
+\myketbra{\Xi_{j}}{\Xi_{j+1}}.
\end{eqnarray}

As a last task, we now seek the value of $\myket{\psi_R(0)}$---i.e.
the values of $\alpha_{j,h}$---such that $\mybra{\psi_R(0)}M_t
\myket{\psi_R(0)}$ is maximal.  The proof of the following
proposition is in Appendix A.4:
\begin{prop}\label{prop_Maximization}
The state $\myket{\psi_R(0)} \propto \sum_{j=0}^t  \sin
\left[\frac{(j+1)\pi}{t+2}\right] \myket{\Xi_j}$ achieves the
maximum value in Eq. (\ref{line_tomaximize}).
\end{prop}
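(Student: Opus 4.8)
\emph{Proof plan.} The plan is to reduce the maximization of $\bra{\psi_R(0)}M_t\ket{\psi_R(0)}$ over Eve's strategies to finding the largest eigenvalue of an explicit $(t+1)\times(t+1)$ symmetric matrix, and then to read off that eigenvalue and its eigenvector by inspection. The first step is the only delicate one; after it, everything is a routine eigenvalue computation.

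Step 1 (reduction to a finite quadratic form). First I would record that $\{\ket{\Xi_0},\dots,\ket{\Xi_t}\}$ is an orthonormal set: by Eq.~(\ref{def_Xis}) each $\ket{\Xi_j}$ is supported in the subspace in which the $t$-qubit ancilla $A$ carries the symmetric state $\ket{S^t_j}$ of weight $j$; these subspaces are mutually orthogonal for distinct $j$, and each $\ket{\Xi_j}$ is a unit vector by the ``$\propto$'' convention. By Eq.~(\ref{eqn_WLOG_RFstate}) at $\phi=0$, together with $\alpha_{j,h}=\beta_j\gamma_{j,h}$, we have $\ket{\psi_R(0)}=\sum_{j,h}\alpha_{j,h}\ket{S^t_j}_A\ket{c_h}_{R'}$, which lies in the span of the $\ket{\Xi_j}$; hence $\ket{\psi_R(0)}=\sum_{j=0}^{t}\lambda_j\ket{\Xi_j}$ with $\lambda_j=(\sum_h|\alpha_{j,h}|^2)^{1/2}\ge 0$ and $\sum_j\lambda_j^2=\braket{\psi_R(0)}{\psi_R(0)}=1$. (Allowing a complex $\lambda_j$, say if the ``$\propto$'' hides a phase, changes nothing: the relevant sum is then $\sum_j\mathrm{Re}(\lambda_j^*\lambda_{j+1})$, and aligning all phases can only increase it.) Using orthonormality of the $\ket{\Xi_j}$ and $M_t=\sum_{j=0}^{t-1}(\ketbra{\Xi_{j+1}}{\Xi_j}+\ketbra{\Xi_j}{\Xi_{j+1}})$, a one-line computation gives $\bra{\psi_R(0)}M_t\ket{\psi_R(0)}=2\sum_{j=0}^{t-1}\lambda_j\lambda_{j+1}$, so Eq.~(\ref{line_tomaximize}) equals $\tfrac12+\tfrac12\sum_{j=0}^{t-1}\lambda_j\lambda_{j+1}$. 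Moreover every nonnegative unit vector $(\lambda_0,\dots,\lambda_t)$ is realizable by Eve (take $\alpha_{j,0}=\lambda_j$ and all other $\alpha_{j,h}=0$), so maximizing over Eve's strategies is exactly maximizing $2\sum_{j=0}^{t-1}\lambda_j\lambda_{j+1}$ over the nonnegative unit sphere in $\real^{t+1}$.

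Step 2 (identify the matrix and its top eigenpair). The form $2\sum_{j=0}^{t-1}\lambda_j\lambda_{j+1}$ equals $\vec\lambda^{\,T}A\vec\lambda$, where $A$ is the $(t+1)\times(t+1)$ symmetric tridiagonal matrix with $1$'s on the two off-diagonals and $0$ elsewhere (the adjacency matrix of a path on $t+1$ vertices). Its maximum over all unit vectors is $\lambda_{\max}(A)$, and I would exhibit an explicit nonnegative maximizer, which in particular shows the nonnegativity constraint from Step 1 is not binding: put $v_j:=\sin((j+1)\pi/(t+2))$ for $j=0,\dots,t$, and note the natural boundary values vanish automatically, $v_{-1}=\sin 0=0$ and $v_{t+1}=\sin\pi=0$. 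By the sum-to-product identity, $(Av)_j=v_{j-1}+v_{j+1}=\sin(j\pi/(t+2))+\sin((j+2)\pi/(t+2))=2\cos(\pi/(t+2))\,v_j$ for every $j$, so $Av=2\cos(\pi/(t+2))\,v$. Since $0<(j+1)\pi/(t+2)<\pi$ we have $v_j>0$ for all $j$, and $A$ is nonnegative and irreducible, so by the Perron--Frobenius theorem $\lambda_{\max}(A)=2\cos(\pi/(t+2))$ and $v$ is, up to scaling, the unique maximizing vector.

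Step 3 (conclude). Translating back through Step 1, the maximum in Eq.~(\ref{line_tomaximize}) is attained precisely at $\ket{\psi_R(0)}\propto\sum_{j=0}^{t}\sin((j+1)\pi/(t+2))\,\ket{\Xi_j}$, which is the claim; its value is $\tfrac12+\tfrac14\cdot 2\cos(\pi/(t+2))=\tfrac12+\tfrac12\cos(\pi/(t+2))=1-\sin^2(\pi/(2(t+2)))$, and using $\sin x\ge 2x/\pi$ on $[0,\pi/2]$ this is at most $1-1/(t+2)^2$, which yields Proposition~\ref{prop_MaxGuessProbIndiv} (with, e.g., $c=1$). The only step needing care is Step 1, the collapse of the a priori unbounded-dimensional optimization over Eve's states to a $(t+1)$-dimensional real quadratic form; but that collapse is already delivered by Lemma~\ref{lem_Main} and Proposition~\ref{prop_DerivationOfPOVM}, so what remains is the elementary computation sketched here.
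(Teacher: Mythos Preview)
Your proof is correct and follows essentially the same strategy as the paper's Appendix~A.4: both exploit the orthonormality of $\{\ket{\Xi_j}\}$ to reduce the problem to the top-eigenvector problem for the path adjacency matrix $M'_t=\sum_{j}(\ketbra{j+1}{j}+\ketbra{j}{j+1})$. The differences are in presentation rather than substance. The paper argues in two steps---fix the $\alpha_{j,h}$ at their (unknown) optimal values, then maximize $\bra{\psi}M_t^\star\ket{\psi}$ over the span of the resulting $\ket{\Xi_j^\star}$, and finally identify this last maximization with the one already solved in Ref.~\cite{BRST06}---whereas you observe directly that $\bra{\psi_R(0)}M_t\ket{\psi_R(0)}=2\sum_j\lambda_j\lambda_{j+1}$ depends only on the norms $\lambda_j=(\sum_h|\alpha_{j,h}|^2)^{1/2}$, and that every nonnegative unit $(\lambda_0,\dots,\lambda_t)$ is realizable. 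Your route is cleaner and avoids the somewhat oblique fixed-point argument. You also carry out the eigenpair computation explicitly (via the sine ansatz and Perron--Frobenius) rather than citing \cite{BRST06}, which makes the argument self-contained; and your half-angle bound $\tfrac12+\tfrac12\cos(\pi/(t+2))=1-\sin^2(\pi/(2(t+2)))\le 1-1/(t+2)^2$ gives the sharper constant $c=1$ in Proposition~\ref{prop_MaxGuessProbIndiv}, improving on the paper's $c=\pi^2/4-\pi^4/48\approx 0.438$.
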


\noindent Thus (as in Ref. \cite{BRST06}---see Appendix A.4), we get
a maximal success probability of
\begin{eqnarray}
&&\frac{1}{2} + \frac{1}{2}\cos(\pi/(t+2)) \\
&\leq& \frac{1}{2} + \frac{1}{2}\left( 1 -\frac{(\pi/(t+2))^2}{2!} +
\frac{(\pi/(t+2))^4}{4!} \right)\\
&=& 1 - \frac{\pi^2}{4}\frac{1}{(t+2)^2} + \frac{\pi^4}{48}\frac{1}{(t+2)^4}\\
&\leq& 1 - \left(\frac{\pi^2}{4} -\frac{\pi^4}{48}\right)\frac{1}{(t+2)^2}\\
&=& 1 - c/(t+2)^2,
\end{eqnarray}
for the constant $c = (\pi^2/4 - \pi^4/48) \doteq 0.438$ and all
$t\geq 1$.  This completes the proof of Proposition
\ref{prop_MaxGuessProbIndiv} and thus the proof of Theorem
\ref{thm_Main}.

\bibliographystyle{unsrt}


\begin{thebibliography}{10}

\bibitem{MvOV96}
A.~J~. Menezes, P.~van Oorschot, and S.~Vanstone.
\newblock {\em Handbook of Applied Cryptography}.
\newblock CRC Press LLC, Boca Raton, 1996.

\bibitem{qphGC01}
Daniel Gottesman and Isaac~L. Chuang.
\newblock Quantum digital signatures, 2001.
\newblock quant-ph/0105032.

\bibitem{Lam79}
L.~Lamport.
\newblock Constructing digital signatures from a one-way function.
\newblock CSL~98, SRI International, October 1979.

\bibitem{KKNY05}
Akinori Kawachi, Takeshi Koshiba, Harumichi Nishimura, and Tomoyuki
Yamakami.
\newblock Computational indistinguishability between quantum states and its
  cryptographic application.
\newblock In {\em Advances in Cryptology – EUROCRYPT 2005}, volume 3494 of {\em
  Lecture Notes in Computer Science}, pages 268--284. Springer, 2005.
\newblock full version at http://arxiv.org/abs/quant-ph/0403069.

\bibitem{HKK08}
Masahito Hayashi, Akinori Kawachi, and Hirotada Kobayashi.
\newblock Quantum measurements for hidden subgroup problems with optimal sample
  complexity.
\newblock {\em Quantum Information and Computation}, 8:0345--0358, 2008.

\bibitem{qphIM09}
Lawrence~M. Ioannou and Michele Mosca.
\newblock Public-key cryptography based on bounded quantum reference frames.
\newblock http://arxiv.org/abs/0903.5156.

\bibitem{Gol01}
O.~Goldreich.
\newblock {\em Foundations of cryptography ({V}olume I): {B}asic tools}.
\newblock Cambridge University Press, Cambridge, 2001.

\bibitem{CR90}
D.~Chaum and S.~Roijakkers.
\newblock Unconditionally-secure digital signatures.
\newblock In {\em {CRYPTO '90}, {S}pringer-{V}erlag {L}ecture {N}otes in
  {C}omputer {S}cience ({V}ol. 537)}, pages 206--214, 1990.

\bibitem{Got05}
Daniel Gottesman.
\newblock Quantum public key cryptography with information-theoretic security.
\newblock Workshop on classical and quantum information security, Caltech, 15 -
  18 December, 2005.
\newblock http://www.cpi.caltech.edu/quantum-security/program.html; see also
  http://www.perimeterinstitute.ca/personal/dgottesman.

\bibitem{DFSS07}
Ivan Damgaard, Serge Fehr, Louis Salvail, and Christian Schaffner.
\newblock Secure identification and {QKD} in the bounded-quantum-storage model.
\newblock {\em {CRYPTO} 2007, Lecture Notes in Computer Science},
  4622:342--359, 2007.

\bibitem{BRST06}
Stephen~D. Bartlett, Terry Rudolph, Robert~W. Spekkens, and Peter~S.
Turner.
\newblock Degradation of a quantum reference frame.
\newblock {\em New J. Phys.}, 8:58, 2006.

\bibitem{Gut09}
Gustav Gutoski.
\newblock {\em Quantum Strategies and Local Operations}.
\newblock PhD thesis, University of Waterloo, 2009.

\bibitem{MS07}
Rajat Mittal and Mario Szegedy.
\newblock Product rules in semidefinite programming.
\newblock In Erzsébet Csuhaj-Varjú and Zoltán Ésik, editors, {\em FCT}, volume
  4639 of {\em Lecture Notes in Computer Science}, pages 435--445. Springer,
  2007.

\bibitem{DAEMM07}
Wim van Dam, G.~Mauro D'Ariano, Artur Ekert, Chiara Macchiavello,
and Michele
  Mosca.
\newblock Optimal quantum circuits for general phase estimation.
\newblock {\em Phys. Rev. Lett.}, 98(9):090501.

\bibitem{BBCMW98}
Robert Beals, Harry Buhrman, Richard Cleve, Michele Mosca, and
Ronald de~Wolf.
\newblock Quantum lower bounds by polynomials.
\newblock In {\em FOCS '98: Proceedings of the 39th Annual Symposium on
  Foundations of Computer Science}, 1998.

\bibitem{vDdAEMM07}
Wim van Dam, G.~Mauro D'Ariano, Artur Ekert, Chiara Macchiavello,
and Michele
  Mosca.
\newblock Optimal phase estimation in quantum networks.
\newblock {\em Journal of Physics A: Mathematical and Theoretical},
  40:7971--7984, 2007.

\bibitem{CAS05}
G.~Chiribella, G.~M. D'Ariano, and M.~F. Sacchi.
\newblock Optimal estimation of group transformations using entanglement.
\newblock {\em Phys. Rev. A}, 72(4):042338, 2005.

\bibitem{Wat08}
John Watrous.
\newblock Theory of quantum information.
\newblock Lecture notes for course CS 789, University of Waterloo. Available at
  http://www.cs.uwaterloo.ca/~watrous/, 2008.

\bibitem{CSUU06}
Richard Cleve, William Slofstra, Falk Unger, and Sarvagya Upadhyay.
\newblock Strong parallel repetition theorem for quantum {XOR} proof systems,
  2006.
\newblock arXiv:quant-ph/0608146v1.

\bibitem{KW00}
Alexei Kitaev and John Watrous.
\newblock Parallelization, amplification, and exponential time simulation of
  quantum interactive proof systems.
\newblock In {\em STOC '00: Proceedings of the thirty-second annual ACM
  symposium on Theory of computing}, 2000.

\end{thebibliography}



\section*{Appendices}

\subsection*{A.1~~~Proof of sufficiency of individual attacks}

Consider the following non-cryptographic, $(t+1)$-round interactive
protocol (or game) between Evelyn and Bobby (neither of whom is
considered adversarial, hence we distinguish these two players from
Eve and Bob), denoted $\mathcal{L} = \mathcal{L}(\Phi)$, where
\begin{eqnarray}
\Phi = (\Phi_1,\Phi_2,\ldots,\Phi_{t+1})
\end{eqnarray}
and the $\Phi_i$ are quantum operations (super-operators) that
specify Evelyn's actions in the game (the quantities $r$ and $t$ are
as defined previously):

\begin{itemize}
\item $(1')$ Bobby chooses a uniformly random $x\in
\{1,2,\ldots,2r+1\}$ and sends a qubit in the state $\ket{0}$ to
Evelyn (who can ignore this qubit---it carries no significant
information).

\item $(2')$  For $i = 1,2,\ldots, t$ $\{$

\subitem \hspace{5mm}$\diamond$ Evelyn performs the quantum
operation $\Phi_i$ on her system, and then sends one qubit to Bobby.

\subitem \hspace{5mm}$\diamond$ Bobby performs the unitary gate
$u_{\phi_x}$ on the qubit received from Evelyn and sends it back to
Evelyn.$\}$

\item $(3')$ Bobby chooses a uniformly random $b \in \{0,1\}$ and
sends a qubit in the state $\ket{0} + (-1)^b e^{i \phi_x}\ket{1}$ to
Evelyn.

\item $(4')$ Evelyn performs the quantum operation $\Phi_{t+1}$ on her
system, and then sends one qubit to Bobby.

\item $(5')$ Bobby measures the received qubit in the computational basis $\{\ket{0},
\ket{1}\}$, getting outcome 0 or 1 (corresponding to $\ket{0}$ and
$\ket{1}$ respectively); he tests whether this outcome equals $b$.
\end{itemize}

The following proposition is straightforward to prove:

\begin{prop} The probability that Eve, using $t$ black boxes $u_{\phi_{x_j}}$, causes Bob's equality test to
pass in a particular iteration $j$ of the protocol in Section
\ref{sec_ProtSpec} is at most
\begin{eqnarray}
\alpha := \max_{\Phi} \textrm{\emph{Pr}}[\textrm{\emph{Bobby's
equality test passes in }}\mathcal{L}(\Phi)],
\end{eqnarray}
where $\Phi$ ranges over all $(t+1)$-tuples of admissible quantum
operations that Evelyn can apply in the game $\mathcal{L}$.
\end{prop}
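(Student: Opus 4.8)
The plan is a direct simulation argument: I will show that any strategy Eve can use against iteration $j$ of the protocol of Section~\ref{sec_ProtSpec}, while holding $t$ black boxes for $u_{\phi_{x_j}}$, is reproduced by some admissible $\Phi$ in the game $\mathcal{L}$ with exactly the same probability of passing the equality test, so that Eve's success probability is at most $\max_\Phi \textrm{Pr}[\textrm{Bobby's test passes in }\mathcal{L}(\Phi)] = \alpha$.

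First I would fix the form of Eve's attack. By the no-concurrency assumption and the reductions of Section~\ref{sec_security}, all of Eve's dependence on $\phi_{x_j}$ in iteration $j$ enters through her $t$ black boxes for $u_{\phi_{x_j}}$, and since her extraction of black boxes from Alice is complete before she begins impersonating Alice to Bob, these are all used before Bob's step~(1) of $\mathcal{K}(x_j)$; moreover, conditioned on the value of $\phi_{x_j}$, any systems Eve carries over from other iterations are in a fixed state independent of $\phi_{x_j}$ and so may be absorbed into her workspace. Hence Eve's attack has the serialized form: an arbitrary operation $V_0$ on an arbitrarily large workspace; one application of $u_{\phi_{x_j}}$ to a designated qubit; then $V_1$; a second application; $\ldots$; $V_{t-1}$; the $t$-th application; $V_t$ --- leaving Eve with some state $\ket{\psi_R(\phi_{x_j})}$ --- followed by receipt of Bob's challenge qubit $\ket{0} + (-1)^b e^{i\phi_{x_j}}\ket{1}$, a final operation and measurement yielding the classical bit $b'$ that Eve sends to Bob. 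By step~(3) of $\mathcal{K}(x_j)$, Bob's iteration-$j$ test passes iff $b' = b$.

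Next I would read off Evelyn's strategy. Set $\Phi_i := V_{i-1}$ for $i=1,\ldots,t$, with the qubit Evelyn transmits in round $i$ of step~$(2')$ being the one carrying Eve's $i$-th black-box call; Bobby's application of $u_{\phi_x}$ to that qubit and return of it reproduces the call exactly, so after step~$(2')$ Evelyn holds $\ket{\psi_R(\phi_x)}$ (the junk qubit of step~$(1')$ being ignored, as allowed). Bobby's step~$(3')$ supplies a qubit in state $\ket{0} + (-1)^b e^{i\phi_x}\ket{1}$, identical to Bob's challenge. Let $\Phi_{t+1}$ carry out Eve's final operation and measurement, record the outcome $b'$ in a computational-basis register, and output that register as the qubit sent to Bobby; Bobby's computational-basis measurement in step~$(5')$ then returns $b'$, and his test passes iff $b'=b$. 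Since $x_j$ in the protocol and $x$ in $\mathcal{L}$ are both uniform on $\{1,\ldots,2r+1\}$, the two success probabilities agree: $\textrm{Pr}[\textrm{Bob's iteration-}j\textrm{ test passes}] = \textrm{Pr}[\textrm{Bobby's test passes in }\mathcal{L}(\Phi)] \le \alpha$. I expect the only point needing care is the first step --- checking that the single-iteration game $\mathcal{L}$ really does capture everything Eve can do in iteration $j$, namely that she gains nothing from black boxes for the other $\phi_{x_k}$, from entanglement across iterations, or from deferring black-box calls until after Bob's challenge --- which is exactly what the no-concurrency assumption, the independence of the $x_k$, and the polynomial normal form behind Lemma~\ref{lem_Main} provide; the remainder is routine bookkeeping, which is why the proposition is ``straightforward.''
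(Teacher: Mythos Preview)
Your simulation argument is correct and matches the paper's intended approach; the paper itself only asserts that the proposition is ``straightforward to prove'' and gives no details, so your explicit mapping $\Phi_i := V_{i-1}$ (with $V_t$ absorbed into $\Phi_{t+1}$ before the final measurement) is precisely the routine bookkeeping the authors had in mind. One small remark: your closing caveat slightly overstates what is needed here---the concern about black boxes for the other $\phi_{x_k}$ and entanglement across iterations is not actually the burden of \emph{this} proposition but of the subsequent product-rule argument (via the parallel game $\mathcal{L}^{\|s}$ and Theorem~\ref{thm_MittalSzegedy}); for the present proposition it suffices, as you note, that conditioned on the other $x_k$ Eve's carried-over state is a fixed (possibly mixed) state independent of $x_j$, which Evelyn can reproduce inside $\Phi_1$.
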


Now consider the parallel $s$-fold repetition of $\mathcal{L}$,
which we denote $\mathcal{L}^{\| s} = \mathcal{L}^{\| s} (\Phi')$,
where now $\Phi'$ denotes Evelyn's quantum operation in
$\mathcal{L}^{\| s}$.  The following proposition is also
straightforward to prove:

\begin{prop}  The probability that Eve fools Bob on the first attempt
using $t$ black boxes per $x$-value in the protocol in Section
\ref{sec_ProtSpec} is at most
\begin{eqnarray}
\alpha':= \max_{\Phi'} \textrm{\emph{Pr}}[\textrm{\emph{all of
Bobby's equality tests pass in }}\mathcal{L}^{\| s}(\Phi')],
\end{eqnarray}
where $\Phi'$ ranges over all $(t+1)$-tuples of admissible quantum
operations that Evelyn can apply in the game $\mathcal{L}^{\| s}$.
\end{prop}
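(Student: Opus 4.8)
The plan is to observe that, once the reductions of Section~\ref{sec_security} are in place, the experiment ``Eve mounts her first impersonation attempt against Bob using $t$ black boxes for $u_{\phi_{x_j}}$ for each $j=1,\dots,s$'' is, round by round, nothing other than a single play of the game $\mathcal{L}^{\| s}$, with Bob cast as Bobby and Eve cast as Evelyn running some admissible strategy $\Phi'$. The claimed bound on Eve's first-attempt success probability is then immediate from the definition of $\alpha'$.

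First I would recall the single-iteration reduction underlying the previous proposition of this appendix: in iteration $j$ Eve may be assumed to hold exactly $t$ black boxes for $u_{\phi_{x_j}}$ (the $r$ extracted from Alice together with the at most $r-1$ coming from her copies of the public key, each replaced by a black box as in Section~\ref{sec_security}), and Bob's entire role in $\mathcal{K}(x_j)$ is to choose a uniform bit $b_j$, send the challenge qubit $\ket{0}+(-1)^{b_j}e^{i\phi_{x_j}}\ket{1}$, receive one qubit in return, measure it in the computational basis (a classical answer being the special case), and test equality with $b_j$. Stacking all $s$ iterations, Eve's most general first-attempt behaviour is: interleave arbitrary operations with her $st$ black-box calls, $t$ per value of $x$, and then, for each $j$, exchange a challenge qubit in and a response qubit out with Bob. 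Matching this against the description of $\mathcal{L}$ exhibits the correspondence exactly: Bobby's step~$(1')$ and the $t$ rounds of step~$(2')$ supply the $t$ black boxes $u_{\phi_{x_j}}$; step~$(3')$ is Bob's Step~(1); step~$(4')$ is Eve's final response; and step~$(5')$ with the equality test is Bob's Step~(3); carrying this out simultaneously and coherently for all $j$ is precisely $\mathcal{L}^{\| s}$. Since Eve's overall strategy is then realised by some admissible $(t+1)$-tuple $\Phi'$, and since ``Eve fools Bob on the first attempt'' means precisely that Bob accepts, i.e.\ that $b'_j=b_j$ for every $j$, i.e.\ that all of Bobby's equality tests pass, we conclude that the first-attempt success probability is at most $\max_{\Phi'}\textrm{Pr}[\text{all tests pass in }\mathcal{L}^{\| s}(\Phi')]=\alpha'$.

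The only step that is not pure bookkeeping is the parallel-versus-sequential point: in Section~\ref{sec_ProtSpec} the $s$ kernel iterations are run one after another, whereas $\mathcal{L}^{\| s}$ runs $s$ copies of $\mathcal{L}$ ``in parallel''. I would settle this by noting that Evelyn's strategy in $\mathcal{L}^{\| s}$ is an \emph{arbitrary} joint super-operator over the $s$ copies, while the per-iteration secrets $x_j$ and $b_j$ are mutually independent and Bobby's messages in copy $j$ depend only on copy $j$; hence the interaction structure on Bob's side is a disjoint union of $s$ independent channels, message scheduling is immaterial, and any sequential (even fully coherent) strategy of Eve is captured by some $\Phi'$. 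This is also precisely the observation that later legitimises applying Gutoski's semidefinite-programming characterisation of $\alpha$ and $\alpha'$ together with the product rule of Mittal and Szegedy to deduce $\alpha'=\alpha^{s}$, and hence, via Proposition~\ref{prop_MaxGuessProbIndiv}, the bound in Line~(\ref{beginning}).
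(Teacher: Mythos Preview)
Your proposal is correct and is precisely the ``straightforward'' argument the paper asserts but does not spell out: you make explicit the step-by-step correspondence between Eve's first impersonation attempt (after the reductions of Section~\ref{sec_security}) and a single play of $\mathcal{L}^{\| s}$, and you correctly observe that any sequential strategy of Eve is realised by some admissible $\Phi'$ in the parallel game, which is all that is needed for the upper bound. One small sharpening: your phrase ``message scheduling is immaterial'' slightly overstates the point---parallel and serial need not coincide in general---but your actual conclusion, that every sequential strategy is captured by some parallel $\Phi'$ (since Evelyn, holding all $s$ challenge qubits at once, can internally process them in order), is exactly right and is the only direction required; the paper itself alludes to this serial-versus-parallel issue only in the closing remark of Appendix~A.1.
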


Therefore, in order to prove that it is sufficient to consider
individual (as opposed to coherent) attacks by Eve, it suffices to
show that $\alpha' = \alpha^s$.

In Ref. \cite{Gut09}, the above game is viewed as an interaction
between a $(t+1)$-round \emph{(non-measuring) strategy} and a
 \emph{(compatible) measuring co-strategy}; Evelyn's operations $\Phi$
form the non-measuring strategy and Bobby's actions form the
measuring co-strategy (technically, Steps $(1')$, $(3')$, and $(4')$
would have to be slightly modified in order to fit the co-strategy
formalism: in Steps $(1')$ and $(3')$, Bobby should make his random
choices in superposition and use the quantum registers storing these
choices as a control register whenever requiring these random values
subsequently; in Step $(4')$, Bobby should only make one final
measurement whose outcome indicates whether the equality test
passes; we assume that these modifications have been made).

For all $i$, let $\X_i$ and $\Y_i$ be the input and output spaces,
respectively, of Evelyn's quantum operation $\Phi_i$ in
$\mathcal{L}$, i.e. $\Phi_i: \lin{\X_i} \rightarrow \lin{\Y_i}$,
where $\lin{\X_i}$ is the space of all linear operators from the
complex Euclidean space $\X_i$ to itself (and likewise for
$\lin{\Y_i}$). Let $\pos{\Y \otimes \X}$ denote the set of all
positive semidefinite operators in $\lin{\Y \otimes \X}$, where $\Y
= \Y_1 \otimes \Y_2 \otimes \cdots \otimes \Y_{t+1}$ (and similarly
for $\X$). For any Euclidean space $\Z$, let $\I_{\Z}$ denote the
identity operator $\Z$.

Ref. \cite{Gut09} shows that Evelyn's strategy can be equivalently
expressed by a single positive semidefinite operator in $\pos{\Y
\otimes \X}$ while Bobby's measuring co-strategy can be expressed by
the collection $\{B_0, B_1\}$ of two positive semidefinite operators
in $\pos{\Y \otimes \X}$, where, without loss of generality, we
assume that $B_0$ corresponds to the measurement outcome indicating
that Bobby's test for equality in Step $(5')$ passes.  We briefly
note that these positive semidefinite operators are the
Choi-Jamio{\l}kowski representations of quantum operations
corresponding to the players' actions.  A more general version of
the following theorem is proved in Ref. \cite{Gut09}:

\begin{theorem}[Interaction output probabilities
\cite{Gut09}]\label{thm_Gut_OutputProbs} For any non-measuring
strategy $X\in \pos{\Y \otimes \X}$ of Evelyn, the probability that
Bobby's equality test passes is $\tr (B_0^\dagger X)$.
\end{theorem}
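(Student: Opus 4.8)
The statement is an instance of the general strategy/co-strategy calculus of Ref.~\cite{Gut09}, where it is proved in full generality; the plan here is to re-derive it from the elementary properties of the Choi--Jamio{\l}kowski representation, specialized to the closed $(t+1)$-round game $\mathcal{L}$, so I only outline the structure.

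First I would record the single-round identity: for a channel $\Phi:\lin{\X_i}\to\lin{\Y_i}$ with Choi operator $J(\Phi)\in\pos{\Y_i\otimes\X_i}$, any $M\in\lin{\Y_i}$, and any input $\rho$,
\[
\tr\big(M\,\Phi(\rho)\big) \;=\; \tr\big((M\otimes\rho^{T})\,J(\Phi)\big),
\]
together with its ``partial'' version (a partial transpose on the relevant factor) when $\Phi$ acts on only one tensor factor of a larger space carrying Evelyn's private memory. Packaging this, the composition of quantum operations corresponds to the \emph{link product} of their Choi operators; the link product is associative, and when every input wire is eventually matched with an output wire it collapses to a scalar of the form $\tr\big((\cdot)^{\dagger}(\cdot)\big)$.

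Next I would unroll $\mathcal{L}(\Phi)$ into a single closed circuit: Bobby's Step $(1')$ preparation; then, alternately, Evelyn's $\Phi_1$ and Bobby's round operation (apply $u_{\phi_x}$, controlled on the register holding the random $x$), then $\Phi_2$, Bobby's round operation, and so on through $\Phi_t$; then Bobby's Step $(3')$ challenge preparation (controlled on the register holding $b$); then Evelyn's $\Phi_{t+1}$; then Bobby's final two-outcome measurement whose ``pass'' effect is $B_0$. Threading the operations $\Phi_1,\ldots,\Phi_{t+1}$ through Evelyn's memory register yields precisely the strategy operator $X\in\pos{\Y\otimes\X}$ of Ref.~\cite{Gut09}, while grouping Bobby's preparations, round operations, the auxiliary registers holding $x$ and $b$, and his final measurement yields the measuring co-strategy $\{B_0,B_1\}\subseteq\pos{\Y\otimes\X}$; the \emph{same} space $\Y\otimes\X$ carries both operators because Evelyn's outputs are Bobby's inputs and vice versa. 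Iterating the link-product/composition identity down the circuit telescopes the probability that Bobby's equality test passes into the full contraction of $X$ against $B_0$, and since the circuit is closed this contraction is the scalar $\tr(B_0^{\dagger}X)$ --- the $\dagger$ being produced by the input/output swap between the two players composed with the transpose in the link product (and here merely cosmetic, since $B_0\geq 0$ is Hermitian).

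The hard part is the bookkeeping in this last step: one must check that Evelyn's recursively defined strategy really does admit a Choi representation lying in $\pos{\Y\otimes\X}$ and obeying the appropriate partial-trace consistency conditions, and that its composition with Bobby's co-strategy contracts \emph{every} wire exactly once with the transposes placed correctly. This is exactly the representation theorem of Ref.~\cite{Gut09}, which I would either invoke directly or --- since the present instance is so concrete --- verify by induction on the round index $i$, carrying along the state on Evelyn's memory, with base case a single channel ($i=1$) and inductive step ``append one Bobby round and one $\Phi_{i+1}$'' handled by the single-round Choi identity above.
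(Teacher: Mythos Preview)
The paper does not actually prove this theorem: it simply cites Ref.~\cite{Gut09} (``A more general version of the following theorem is proved in Ref.~\cite{Gut09}'') and then uses the result as a black box to set up the semidefinite programs $\pi_\alpha$ and $\pi_{\alpha'}$. So there is no proof in the paper to compare your proposal against.

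That said, your sketch is a faithful outline of the argument that Ref.~\cite{Gut09} gives for the general result: representing Evelyn's sequence of channels by the Choi--Jamio{\l}kowski operator of the composite map (the strategy $X$), representing Bobby's preparations, round maps, and final measurement by the co-strategy pair $\{B_0,B_1\}$, and then using the link-product/Choi identity to collapse the interaction probability into the Hilbert--Schmidt inner product $\tr(B_0^\dagger X)$. Your remark that the substantive content is the representation theorem (that the recursively defined strategy admits such a Choi operator with the correct partial-trace constraints) is exactly right, and your proposed induction on the round index is the standard way to verify it. Nothing in your sketch is wrong; it is just more than the paper itself supplies.
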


\noindent Using Theorem \ref{thm_Gut_OutputProbs}, it is shown, in
the proof of Theorem 3.3 of Ref. \cite{Gut09}, that the maximal
probability with which Bobby's measuring co-strategy can be forced
to output the outcome corresponding to $B_0$ by some (compatible)
strategy of Evelyn's can be expressed as a semidefinite
(optimization) program (see Ref. \cite{Wat08} for a relevant review
of semidefinite programming).
 Thus $\alpha$ and
$\alpha'$ can be expressed, respectively, as solutions to the
following semidefinite programs $\pi_{\alpha}$ and $\pi_{\alpha'}$:
%
%
\begin{eqnarray}\nonumber
\underline{\pi_\alpha} \hspace{15mm}&\hspace{5mm}&
\hspace{15mm}\underline{\pi_{\alpha'}}\\\nonumber
    \text{maximize:} \hspace{2mm}\tr(B_0^\dagger X) \hspace{8mm}& \hspace{5mm}  &  \text{maximize:} \hspace{2mm}\tr((B_0^{\otimes s})^\dagger
    X)\\\nonumber
\text{subject to:} \hspace{2mm}
    \tr_{\Y} (X) = \I_{\X}, & \hspace{5mm} &\text{subject to:}
    \hspace{2mm}
    \tr_{\Y'} (X) = \I_{\X'},\\\nonumber
X\in \pos{\Y\otimes \X} &\hspace{5mm}& \hspace{14mm}X\in \pos{\Y'
\otimes \X'},
\end{eqnarray}
where, for all $i$, $\X_i' = \X_i^{\otimes s}$ and $\X' = \X_1'
\otimes \X_2' \otimes \cdots \otimes \X_{t+1}'$ (and similarly for
$\Y_i'$ and $\Y'$).  We note that the first constraint in each
semidefinite program above codifies the property of
trace-preservation for the quantum operation corresponding to $X$,
while the second constraint codifies the property of complete
positivity (see Ref. \cite{Wat08} for details).  Furthermore, it is
shown in Ref. \cite{Gut09} that such semidefinite programs (arising
from interactions between strategies and compatible co-strategies)
satisfy the condition of strong duality, which means that the
solution to each semidefinite program above coincides with that of
its dual.

In Ref. \cite{MS07}, the following theorem is proven:

\begin{theorem}[Condition for product rule for semidefinite programs \cite{MS07}]\label{thm_MittalSzegedy}
Suppose that the following two semidefinite programs $\pi_1$ and
$\pi_2$ satisfy strong duality:
%
\begin{eqnarray}\nonumber
\underline{\pi_1} \hspace{15mm}&\hspace{5mm}&
\hspace{15mm}\underline{\pi_2}\\\nonumber
    \text{maximize:} \hspace{2mm}\textrm{\emph{Tr}}(J_1^\dagger W) \hspace{6mm}& \hspace{5mm}  &  \text{maximize:} \hspace{2mm}\textrm{\emph{Tr}}(J_2^\dagger W)
    \\\nonumber
\text{subject to:} \hspace{2mm}
    \Psi_1 (W) = C_1, & \hspace{5mm} &\text{subject to:}
    \hspace{2mm}
    \Psi_2 (W) = C_2,\\\nonumber
W\in\pos{\W_1} &\hspace{5mm}& \hspace{14mm}W\in\pos{\W_2},
\end{eqnarray}
where $\Psi_1: \lin{\W_1} \rightarrow \lin{\Z_1}$ and $\Psi_2:
\lin{\W_2} \rightarrow \lin{\Z_2}$, for complex Euclidean spaces
$\W_1, \Z_1, \W_2, \Z_2$, and $J_1 \in \lin{\W_1}$ and
$J_2\in\lin{\W_2}$ are Hermitian.  Let $\alpha(\pi_1)$ and
$\alpha(\pi_2)$ denote the semidefinite programs' solutions.  If
$J_1$ and $J_2$ are positive semidefinite, then the solution to the
following semidefinite program, denoted $\pi_1 \otimes \pi_2$, is
$\alpha(\pi_1 \otimes \pi_2) = \alpha(\pi_1) \alpha(\pi_2)$:
\begin{center}
  \begin{minipage}{3in}
    \centerline{\underline{$\pi_1 \otimes \pi_2$}}\vspace{-7mm}
    \begin{eqnarray*}
    \text{maximize:} && \textrm{\emph{Tr}}( (J_1\otimes J_2)^\dagger W)\\
    \text{subject to:} &&
    \Psi_{1} \otimes \Psi_2 (W) = C_1 \otimes C_2,\\
    && W\in\pos{\W_1 \otimes \W_2}.
    \end{eqnarray*}
  \end{minipage}
\end{center}
\end{theorem}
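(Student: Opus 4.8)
The plan is to establish the product rule by a two-sided inequality, invoking strong duality only for $\pi_1$ and $\pi_2$ and weak duality for $\pi_1 \otimes \pi_2$, so that strong duality of the tensored program is never required. First I would record the canonical primal/dual pair: the dual of a program of the form $\max\{\Tr(J^\dagger W): \Psi(W)=C,\ W \succeq 0\}$ is $\min\{\Tr(C^\dagger Y): \Psi^\ast(Y) \succeq J,\ Y \text{ Hermitian}\}$, where $\Psi^\ast$ denotes the (Hilbert--Schmidt) adjoint of $\Psi$, and note the two facts that make the tensoring work: $(\Psi_1 \otimes \Psi_2)^\ast = \Psi_1^\ast \otimes \Psi_2^\ast$, and the data of $\pi_1 \otimes \pi_2$ factors as $J_1 \otimes J_2$ and $C_1 \otimes C_2$.

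For the easy direction $\alpha(\pi_1 \otimes \pi_2) \geq \alpha(\pi_1)\alpha(\pi_2)$, I would take primal-optimal $W_1, W_2$ and check that $W_1 \otimes W_2$ is primal-feasible for $\pi_1 \otimes \pi_2$: it is positive semidefinite because the tensor product of two positive semidefinite operators is positive semidefinite, and $(\Psi_1 \otimes \Psi_2)(W_1 \otimes W_2) = \Psi_1(W_1)\otimes\Psi_2(W_2) = C_1 \otimes C_2$. Its objective value is $\Tr((J_1\otimes J_2)^\dagger(W_1\otimes W_2)) = \Tr(J_1^\dagger W_1)\,\Tr(J_2^\dagger W_2) = \alpha(\pi_1)\alpha(\pi_2)$, which gives the inequality.

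For the reverse direction I would produce a dual-feasible solution of $\pi_1 \otimes \pi_2$ of the correct value. Let $Y_1, Y_2$ be dual-optimal for $\pi_1, \pi_2$; by strong duality $\Tr(C_i^\dagger Y_i) = \alpha(\pi_i)$, and dual feasibility gives $\Psi_i^\ast(Y_i) \succeq J_i \succeq 0$. The crux is the operator inequality: if $A \succeq J_1 \succeq 0$ and $B \succeq J_2 \succeq 0$ then $A \otimes B \succeq J_1 \otimes J_2$, which I would prove from the telescoping identity $A \otimes B - J_1 \otimes J_2 = A \otimes (B - J_2) + (A - J_1) \otimes J_2$ together with the positive-semidefiniteness of tensor products (note $A \succeq 0$ follows from $A \succeq J_1 \succeq 0$, and likewise $J_2 \succeq 0$, so both summands are positive semidefinite). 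Applying this with $A = \Psi_1^\ast(Y_1)$, $B = \Psi_2^\ast(Y_2)$ yields $(\Psi_1^\ast \otimes \Psi_2^\ast)(Y_1 \otimes Y_2) = \Psi_1^\ast(Y_1)\otimes\Psi_2^\ast(Y_2) \succeq J_1 \otimes J_2$, so $Y_1 \otimes Y_2$ (which is Hermitian as a tensor product of Hermitian operators) is dual-feasible for $\pi_1 \otimes \pi_2$. Weak duality then gives $\alpha(\pi_1 \otimes \pi_2) \leq \Tr((C_1 \otimes C_2)^\dagger(Y_1 \otimes Y_2)) = \Tr(C_1^\dagger Y_1)\,\Tr(C_2^\dagger Y_2) = \alpha(\pi_1)\alpha(\pi_2)$, and combining the two bounds finishes the proof.

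The step I expect to carry the real weight is the reverse inequality, and inside it the tensor-product monotonicity lemma: this is precisely where the hypothesis $J_1, J_2 \succeq 0$ is indispensable, since without it $A \otimes B - J_1 \otimes J_2$ need not be positive semidefinite and the product rule can fail. A secondary point to handle with care is attainment of the optima: if strong duality is only assumed as equality of optimal values, I would instead take $\epsilon$-optimal dual solutions for $\pi_1$ and $\pi_2$, run the identical argument, and let $\epsilon \to 0$, using that the objective values in question are nonnegative (as they are in the cryptographic application, where $\alpha$ is a probability).
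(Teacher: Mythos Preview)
The paper does not give its own proof of this theorem; it merely quotes the result from Mittal and Szegedy \cite{MS07} and applies it. Your argument is essentially the Mittal--Szegedy proof: the lower bound via the tensor product of primal feasible points, and the upper bound via the tensor product of dual-optimal solutions combined with weak duality for the tensored program. The telescoping identity $A\otimes B - J_1\otimes J_2 = A\otimes(B-J_2) + (A-J_1)\otimes J_2$ is exactly the mechanism used in \cite{MS07}, and you have correctly isolated the hypothesis $J_1,J_2\succeq 0$ as the place where it enters. Your remark about substituting $\epsilon$-optimal solutions when attainment is not guaranteed is also in the spirit of the original argument.
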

Since $B_0$ is positive semidefinite and $\pi_{\alpha'} =
\pi_\alpha^{\otimes s}$ (using the associativity of $\otimes$),
Theorem \ref{thm_MittalSzegedy} can be applied $(s-1)$ times in
order to prove that $\alpha' = \alpha^s$ as required.  See Ref.
\cite{Gut09} for a similar approach, based on ideas in Ref.
\cite{CSUU06}.  The idea of expressing the acceptance probability of
a quantum interactive proof system as a semidefinite program first
appeared in Ref. \cite{KW00}.

Note that this argument, combined with the arguments in the main
body of the paper, shows that both the serial and parallel versions
of our identification protocol are secure.

\subsection*{A.2~~~Proof of Proposition \ref{prop_DerivationOfPOVM}}

Two facts hold without loss of generality:
\begin{itemize}

\item the POVMs $\{E_{w,0}, E_{w,\pi}\}$, for
all $w$, may be assumed to be covariant, i.e. $E_{w,\pi} = V_\pi
E_{w,0} V_\pi^\dagger$ (to see this, note that any
not-necessarily-covariant POVM $\{F_{w,0}, F_{w,\pi}\}$ gives the
same average probability of successfully guessing $\theta$, given
$w$, as the covariant POVM $\{E_{w,0}, E_{w,\pi}\}$ defined by
$E_{w,0} = (F_{w,0} + V_\pi^\dagger F_{w, \pi} V_\pi)/2$);

\item each $E_{w,0}$ has support only on $\sp( \mathcal{O}_w)$ and thus $E_{w,0} + E_{w,\pi} = I_{\sp( \mathcal{O}_w)}$, where $I_{\sp( \mathcal{O}_w)}$ is the identity operator on $\sp( \mathcal{O}_w)$.

\end{itemize}
\noindent To compute a basis of $\sp( \mathcal{O}_w)$, we now
further define the system $R'$ in the proof of Lemma \ref{lem_Main}
to consist of exactly $t+1$ qubits and the states $\ket{c_h}$,
$h=0,1,\ldots,t$, to be all those computational basis states whose
labels have Hamming weight 1 (thus $q = 2t+1$, which is larger than
necessary, but simplifies the structure of the POVMs).  The total
subspace
\begin{eqnarray}
S \equiv \sp\left(\{\myket{S^t_j}\}_{j=0,\ldots,t} \otimes
\{\myket{c_h}\}_{h=0,1,\ldots,t} \otimes \{\ket{0}, \ket{1}\}\right)
\end{eqnarray}
supporting $\ket{\psi_{RS}(\phi,\theta)}$ breaks up into mutually
orthogonal subspaces $S_w$ of weight $w$, i.e., spanned by
computational basis states whose labels have Hamming weight $w$:
\begin{eqnarray}\label{eqn_S1}
S_1 &=& \sp\left(\myket{S^t_0}\otimes \{\ket{c_h}\}_h \otimes
\ket{0}\right)\\\label{eqn_Sk} S_k &=&
\sp\left(\myket{S^t_{k-1}}\otimes \{\ket{c_h}\}_h \otimes \ket{0},
\myket{S^t_{k-2}}\otimes \{\ket{c_h}\}_h \otimes
\ket{1}\right),\\
\label{eqn_St+2} S_{t+2} &=& \sp\left( \myket{S^t_t}\otimes
\{\ket{c_h}\}_h \otimes \ket{1}\right),
\end{eqnarray}
for $k=2,3,\ldots, t+1$.  Thus, for each $w$, we will do the
following:
\begin{itemize}
\item write $P_w$ in the basis in which $S_w$ is
expressed in Eqs (\ref{eqn_S1}), (\ref{eqn_Sk}), (\ref{eqn_St+2}),

\item derive an expression for $P_w  \myket{\psi_{RS}(0,0)}$ (which is proportional to $\ket{\Psi_w}$) in order to
find a basis for $\sp( \mathcal{O}_w) = \sp\{\ket{\Psi_w},V_\pi
\ket{\Psi_w}\}$ (which fully supports $E_{w,0}$), and

\item derive the form of $E_{w,0}$ and thus, by covariance, the form of the POVM
$\{E_{w,0},E_{w,\pi}\}$ in each subspace $S_w$.
\end{itemize}
\noindent Recalling Eq. (\ref{eqn_WLOG_RFstate}), it will be
convenient to let $\alpha_{j,h}\equiv b_j g_{j,h}$ and so
\begin{eqnarray}\label{eqn_repOfpsiR0}
\ket{\psi_R(0)} = \sum_{j,h} \alpha_{j,h} \myket{S^t_j} \myket{c_h}.
\end{eqnarray}

\vspace{3mm} \noindent \underline{$w=1$:}

Writing
\begin{eqnarray}
&&P_1 \ket{\psi_{RS}(0,0)}\\
&=& \left( \sum_{h} \myketbra{S^t_0}{S^t_0}\otimes
\ketbra{c_h}{c_h}\otimes
\ketbra{0}{0}\right)\ket{\psi_R(0)}{(\ket{0}+\ket{1})}/{\sqrt{2}}\\
&=&\myket{S^t_0}\left( \sum_h [(\mybra{S^t_0}\mybra{c_h}\ket{\psi_R(0)})/\sqrt{2}]\myket{c_h}\right)\ket{0}\\
&=&\myket{S^t_0}\left( \sum_h
[\alpha_{0,h}/\sqrt{2}]\myket{c_h}\right)\ket{0},
\end{eqnarray}
we see that $V_{\pi}\myket{\Psi_1} = \myket{\Psi_1}$ so that
$E_{1,0} = E_{1,\pi} = \myket{\Xi_{0}}\ket{0}\mybra{\Xi_0}\bra{0}$,
where $\myket{\Xi_{0}}$ is a state such that
\begin{eqnarray}
\myket{\Xi_{0}} \propto \myket{S^t_0}\sum_h
[\alpha_{0,h}/\sqrt{2}]\myket{c_h}.
\end{eqnarray}
We note that getting the outcome corresponding to this POVM element
does not give any information about $\theta$; we arbitrarily assign
a guess of ``$\theta = 0$'' to this outcome, without affecting
optimality (since $\theta$ is a priori uniformly distributed).

\vspace{3mm} \noindent \underline{$w \in \{2,3,\ldots,t+1\}$:}

Similarly, we can write
\begin{eqnarray}
&& P_w \myket{\psi_{RS}(0,0)}\\
&=&\myket{S^t_{w-1}}\left( \sum_h
[\alpha_{w-1,h}/\sqrt{2}]\myket{c_h}\right)\ket{0} +\\
&&\myket{S^t_{w-2}}\left( \sum_h
[\alpha_{w-2,h}/\sqrt{2}]\myket{c_h}\right)\ket{1}.
\end{eqnarray}
Chiribella et al. \cite{CAS05} show that $E_{w,0}$ may be assumed to
have rank 1 without loss of generality.  Thus $E_{w,0}$ may be
written $\myketbra{\eta_w}{\eta_w}$, where
\begin{eqnarray}
\ket{\eta_w}  &=& a \ket{\Xi_{w-1}}\ket{0} + b
\ket{\Xi_{w-2}}\ket{1},
\end{eqnarray}
for some complex coefficients $a$ and $b$, such that $|a|^2 + |b|^2
= 1$, where $\myket{\Xi_{w-1}}$ and $\myket{\Xi_{w-2}}$ are states
such that, for $j=0,1,\ldots,t$,
\begin{eqnarray}\label{def_Xis}
\myket{\Xi_{j}} \propto \sum_h
\frac{\alpha_{j,h}}{\sqrt{2}}\myket{S^t_j}\myket{c_h}.
\end{eqnarray}
We have (using covariance to get $E_{w,\pi}$)
\begin{eqnarray}
&& E_{w,0} + E_{w,\pi}\\
&=& 2(|a|^2\ket{\Xi_{w-1}}\ket{0}\bra{\Xi_{w-1}}\bra{0} +
|b|^2\ket{\Xi_{w-2}}\ket{1}\bra{\Xi_{w-2}}\bra{1}).
\end{eqnarray}
But
\begin{eqnarray}
&& E_{w,0} + E_{w,\pi}\\
&=&I_{\sp(\mathcal{O}_w)}\\
&=& \ket{\Xi_{w-1}}\ket{0}\bra{\Xi_{w-1}}\bra{0} +
\ket{\Xi_{w-2}}\ket{1}\bra{\Xi_{w-2}}\bra{1}.
\end{eqnarray}
Equating the two expressions implies that
\begin{eqnarray}
\ket{\eta_w} = \frac{1}{\sqrt{2}} (\ket{\Xi_{w-1}}\ket{0} + e^{i
\varphi_w}\ket{\Xi_{w-2}}\ket{1}),
\end{eqnarray}
for some phase $\varphi_w$.  But we must have $\varphi_w = 0$ since
$E_{w,0}$ corresponds to the guess ``$\theta = 0$''.

\vspace{3mm} \noindent \underline{$w=t+2$:}

Similar to the case $w=1$ and using the definition from Eq.
(\ref{def_Xis}), we have $E_{t+2,0}=E_{t+2,\pi} =
\myket{\Xi_t}\ket{1}\mybra{\Xi_t}\bra{1}$.  We assign the guess
``$\theta=\pi$'' to getting the outcome corresponding to this POVM
element.

To summarize, the elements of the overall POVM $\{E_0,E_\pi\}$
describing the measuring-and-guessing strategy may be expressed
\begin{eqnarray}
E_0 &=& \myket{\Xi_{0}}\ket{0}\mybra{\Xi_0}\bra{0}
+\sum_{w=2}^{t+1}\ketbra{w,+}{w,+} \\
E_\pi &=& \sum_{w=2}^{t+1}\ketbra{w,-}{w,-}
+\myket{\Xi_t}\ket{1}\mybra{\Xi_t}\bra{1},
\end{eqnarray}
where
\begin{eqnarray}
\myket{w,\pm} \equiv \frac{1}{\sqrt{2}} (\ket{\Xi_{w-1}}\ket{0} \pm
\ket{\Xi_{w-2}}\ket{1}).
\end{eqnarray}

\subsection*{A.3~~~Proof of Theorem \ref{thm_Main}, assuming Eq. (\ref{eqnPbreak})}

For security with error $\epsilon$, we require
\begin{eqnarray}
r(1 - c/(2r+1)^2)^s < \epsilon,
\end{eqnarray}
which, by taking the logarithm of both sides, is equivalent to
\begin{eqnarray}\label{ss}
s > \log(\epsilon/r)/\log(1 - c/(2r+1)^2).
\end{eqnarray}
Using the series expansion $\log(1 - x) = -(x + x^2/2 + x^3/3 +
\cdots)$, the right-hand side of Eq. (\ref{ss}) is upper-bounded by
\begin{eqnarray}
(2r+1)^2\log(r/\epsilon)/c,
\end{eqnarray}
from which the theorem follows.

\subsection*{A.4~~~Proof of Proposition \ref{prop_Maximization}}

This maximization problem is very similar to that in Ref.
\cite{BRST06}, where it was required to maximize $\mybra{\zeta}M'_t
\myket{\zeta}$ over all states $\ket{\zeta} \in \sp
\{\ket{j}:j=0,1,\ldots,t \}$ for
\begin{eqnarray}
M'_t = \sum_{j=0}^{t-1} \myketbra{j+1}{{j}} +\myketbra{{j}}{{j+1}}.
\end{eqnarray}
\noindent In fact, in light of Eq. (\ref{eqn_WLOG_RFstate}), the
phase estimation problem in Ref. \cite{BRST06} may be viewed as the
same as the one we consider, but where Eve does not have access to
the register $R'$.  (Indeed, our optimal success probability cannot
be less than that in Ref. \cite{BRST06}, since at the very least Eve
can forgo the use of the ancillary register $R'$.)  Finally, below,
we show that our optimal success probability is exactly equal to
that obtained in Ref. \cite{BRST06}.

Let $\alpha_{j,h}^\star$ denote the optimal values for our
maximization problem, and let $M_t^\star$,
$\myket{\psi_R(0)^\star}$, and $\myket{\Xi_{j}^\star}$ denote the
values of $M_t$, $\myket{\psi_R(0)}$, and $\myket{\Xi_{j}}$ at those
optimal values. Note that $\{\myket{\Xi_{j}}: j=0,1,\ldots,t\}$ is
orthonormal for all values of $\alpha_{j,h}$, thus
$\{\myket{\Xi_{j}^\star}: j=0,1,\ldots,t\}$ is orthonormal. Consider
now optimizing $\bra{\psi}M_t^\star\ket{\psi}$ over all unit vectors
$\ket{\psi} \in \sp\{\myket{\Xi_{j}^\star}: j=0,1,\ldots,t\}$ for
fixed $M_t^\star$; denote the optimal $\ket{\psi}$ as
$\ket{\psi^\star}$. It must be that
\begin{eqnarray}\label{eqn_ineq}
\bra{\psi^\star}M_t^\star\ket{\psi^\star} \geq
\bra{\psi_R(0)^\star}M_t^\star\ket{\psi_R(0)^\star},
\end{eqnarray}
since $\ket{\psi_R(0)^\star} \in \sp\{\myket{\Xi_{j}^\star}:
j=0,1,\ldots,t\}$ by inspecting Eqs (\ref{eqn_repOfpsiR0}) and
(\ref{def_Xis}).  Now note that the coefficients of
$\ket{\psi^\star}$ with respect to the basis
$\{\myket{\Xi_{j}^\star}: j=0,1,\ldots,t\}$ must be precisely those
coefficients of the optimal $\ket{\zeta}$ with respect to the
standard orthonormal basis $\{\ket{j}:j=0,1,\ldots,t \}$ found in
Ref. \cite{BRST06}; otherwise, substituting the coefficients of
$\ket{\psi^\star}$ would give a higher maximum than that in Ref.
\cite{BRST06}.  (The argument works because, in both cases, the
orthonormal basis is fixed for the optimization.)  Therefore, we
have, as in Ref. \cite{BRST06},
\begin{eqnarray}
\ket{\psi^\star} \propto \sum_{j=0}^t  \sin
\left[\frac{(j+1)\pi}{t+2}\right] \myket{\Xi_j}.
\end{eqnarray}

\end{document}